\providecommand{\tabularnewline}{\\}
\def\RSsubtxt{section~}\newref{sub}{name = \RSsubtxt}}
\def\RSthmtxt{theorem~}\newref{thm}{name = \RSthmtxt}}
\def\RSlemtxt{lemma~}\newref{lem}{name = \RSlemtxt}}
\numberwithin{equation}{section}
\numberwithin{figure}{section}
\theoremstyle{plain}
\newtheorem{thm}{\protect\theoremname}
  \theoremstyle{definition}
  \newtheorem{example}[thm]{\protect\examplename}
  \theoremstyle{remark}
  \newtheorem*{rem*}{\protect\remarkname}
  \theoremstyle{plain}
  \newtheorem{fact}[thm]{\protect\factname}
  \theoremstyle{definition}
  \newtheorem{defn}[thm]{\protect\definitionname}
  \theoremstyle{remark}
  \newtheorem{claim}[thm]{\protect\claimname}
  \providecommand{\claimname}{Claim}
  \providecommand{\definitionname}{Definition}
  \providecommand{\examplename}{Example}
  \providecommand{\factname}{Fact}
  \providecommand{\remarkname}{Remark}
\providecommand{\theoremname}{Theorem}
\begin{document}

\title{Toroidal Spin-Networks: Towards a Generalization of the Decomposition
Theorem}

\author{Hans-Christian Ruiz}

\maketitle
\begin{center}
{\footnotesize Max-Planck-Institute for Gravitational Physics, Golm,
Germany (Internship)}
\par\end{center}{\footnotesize \par}

\begin{center}
{\footnotesize Email: }\textit{\footnotesize hans.christian.ruiz@aei.mpg.de}
\par\end{center}{\footnotesize \par}
\begin{abstract}
In this paper Moussouris' algorithm for the decomposition of spin
networks is reviewed and the implicit assumptions made in the Decomposition
Theorem relating a spin network with its state sum are examined. It
is found that the theorem in the original form hides the importance
of the orientation of the vertices in the spin networks and, more
important, that the algorithm for the evaluation of spin networks
assumes a cycle for the reduction of the graph to work in the proof
of the theorem. It is shown that this is not always the case and this
rises doubt about the generality of the theorem. Having this issue
in mind, the theorem is restated to account for toroidal spin networks,
i.e. networks cellular embeddable in the torus. For this, the minimal
non-planar spin network is examined and the algorithm is extended
to account for the non-planarity of toroidal spin networks. Furthermore,
three types of minimal non-planar spin networks are found, two of
them are toroidal and the third is cellular embeddable only in the
double torus and without a cycle. Some examples for both toroidal
spin networks are given and the relation between these is found. Finally,
the issues and the possibility of generalizing the Decomposition Theorem
is shortly discussed.
\end{abstract}

\section{Introduction}

Moussouris' algorithm for the evaluation of planar spin networks and
the Decomposition Theorem are important tools for the evaluation of
spin networks, for instance one uses them in order to extract the
state sum of a triangulated 3-dimensional manifold with boundary giving
rise to the Ponzano-Regge theory relating the 1-skeleton of the triangulation
to the ``partition function'' of the geometry of the given manifold.
This is made by considering the triangulation of the boundary as the
dual to a spin network and decomposing the manifold, by means of Moussouris'
algorithm, into tetrahedra, which are related in a specific way to
the $6j$-symbols found in the coupling theory of $SU(2)$-representations.
This association allows the construction of the state sum of the given
manifold as a sum over a product of $6j$-symbols, \cite{moussouris1983quantum,ponzano1969semiclassical,regge1961general}.
This idea can be generalized to a more abstract setting as described
in \cite{barrett1996invariants}, for instance the model in \cite{turaev1992state},
where the category of representations of the quantum group $U_{q}(\mathfrak{sl}_{2})$
is used, rather than the category of $SU(2)$-representations as in
the Ponzano-Regge model.

When considering general spin networks, one has to rise the question
about the global topology of the manifold encoded in the spin network
under consideration and whether is possible to blindly apply the Decomposition
Theorem to obtain a state sum encoding this information. In other
words, how is the global topology of the manifold reflected in the
evaluation of the corresponding spin network? This question is important
since from the graph-theoretical point of view planar spin networks,
i.e. those embeddable in a sphere, are a special case and one has
to imagine more complicated generic spin networks if one takes them
to be quantum states of 3-dimensional space. In what follows we will
explore this issue using some basic concepts of (topological) graph
theory taken from \cite{hartsfield2003pearls,beineke1997graph} to
describe the embeddings of spin networks in surfaces, in particular
the cellular embeddings of non-planar graphs with a $(3,3)$-bipartite
graph as subgraph, and present an extension of Moussouris' algorithm
in order to account for the non-planarity of the minimal spin network
with non-trivial topology, i.e. non-trivial topology of the surface
in which it is cellular embeddable.

The reader familiarized with graph theory and the evaluation of spin
networks via Moussouris' algorithm might want to take a look at the
example \vref{exa:case-3-3} and the following ones and jump directly
to \vref{sub:The-Toroidal-SN}. However, to make this paper self-contained
we will give some basic notions of graph theory with especial attention
to Kuratowski's Theorem stating an easy condition for a graph to be
non-planar, and the Rotation Scheme Theorem describing how to obtain
all cellular embeddings for a given graph out of a simple rule. Furthermore,
we will consider the $(3,3)$-bipartite graph $K_{3,3}$ as the minimal
non-planar spin network, construct its non-equivalent cellular embeddings
and show that these are the only possible ones up to permutation of
its vertices and their relative orientation among those vertices of
the same set in $K_{3,3}$. 

Moreover, Moussouris' algorithm for the evaluation of planar spin
networks and the Decomposition Theorem are presented in section \vref{sub:The-Decomposition-Theorem}
of this paper. We will apply and extend these ideas to evaluate the
above mentioned spin networks, which will allow us to define a toroidal
symbol in order to attempt a generalization of the Decomposition Theorem.
To achieve this a few identities for relating the evaluation of the
different embeddings in the torus are given. We explain the main concepts
involving the generalization of the evaluation of non-planar spin
networks in terms of toroidal symbols, however, the process for obtaining
the main result is still ongoing, thus some caveat concerning the
generalization of Kuratowski's Theorem and its possible solution will
be pointed out.

\section{Kuratowski's Theorem and the Embedding of Graphs in Surfaces\label{sec:Kuratowski's-Theorem-Embeddings}}

A \textbf{graph} $G$ is a pair of sets $(V_{G};E_{G})$ where $V_{G}\neq\emptyset$
is the set of vertices of $G$ and $E_{G}$ is a set of unordered
pairs of elements of $V_{G}$ which might be empty. The pairs of vertices
are called edges and they are defined, in an abstract way, as a relation
between the vertices in $G$. If two vertices form an edge, we say
they are adjacent or neighbors. A \textbf{subgraph} of a graph $G$
is a graph $H=(V_{H};E_{H})$ such that $V_{H}\subseteq V_{G}$ and
$E_{H}\subseteq E_{G}$, \cite{hartsfield2003pearls}. Another concept
related to subgraphs are the \textbf{minors} of a graph; these are
graphs obtained from $G$ by a succession of edge-deletions and edge-contractions.
If the minor $M$ was obtained only by edge contractions, then $G$
is said to be contractible to $M$, \cite{beineke1997graph}.

We are able to consider only trivalent vertices since spin networks
of higher degree, i.e. with vertices of higher valence, are expandable
to cubic graphs. On the other hand, a spin network is only allowed
to have at most one edge between two adjacent vertices. If two vertices
are joined by more than one edge, the structure is called a multigraph.
In the case of cubic graphs we only have trivalent vertices, hence,
a multigraph would have two vertices joined by at most three edges.
This impose, however, no constraints in the class of spin networks
since the double edge can be reduced to a single edge using Schur's
lemma and a triple edge is exactly the theta-net defined as the value
of a 3-vertex.

Two graphs are \textbf{homeomorphic} if they can be obtained from
the same graph by subdividing its edges. Subdividing an edge $e=vw$
between two vertices $v,w$ is the operation of inserting a new vertex
$z$ such that $e$ is replaced by two new edges $vz$ and $zw$,
\cite{beineke1997graph}. 

An embedding of a graph on a surface is called \textbf{cellular }if
each region is homeomorphic to an open disc. It is in this sense that
planarity is defined, namely, a graph is planar if it can be embedded
in a plane, hence, in the 2-sphere. Surprisingly, there is a simple
criterion for determining whether a graph is planar, or not, given
by Kuratowski's theorem.
\begin{thm}
\textbf{\label{thm:Kuratowski's-Theorem}Kuratowski's Theorem.} A
graph is planar if and only if it has neither $K_{5}$ nor $K_{3,3}$
as a minor, i.e. there are no subgraphs homeomorphic or contractible
to $K_{5}$ and $K_{3,3}$.
\[
\xy/r2.0pc/:{\xypolygon5"C"{}},"C1"*@{*};"C3"*@{*}**@{-},"C1";"C4"**@{-},"C2"*@{*};"C4"*@{*}**@{-},"C2";"C5"*@{*}**@{-},"C3";"C5"**@{-},"C0"-(0,1.3)*{K_{5}}\endxy\:\text{ and }\xy/r2.0pc/:{\xypolygon4"B"{~:{(2.5,0):(0,.5)::}~>{}}},"B1";"B2"**{}?<>(.5)="B5"*@{*},"B3";"B4"**{}?<>(.5)="B6"*@{*},"B1"*@{*},"B2"*@{*},"B3"*@{*},"B4"*@{*},"B1";"B3"**@{-},"B1";"B4"**@{-},"B1";"B6"**@{-},"B5";"B3"**@{-},"B5";"B6"**@{-},"B5";"B4"**@{-},"B1";"B3"**@{-},"B2";"B3"**@{-},"B2";"B6"**@{-},"B2";"B4"**@{-},"B0"-(0,1.3)*{K_{3,3}}\endxy
\]

\end{thm}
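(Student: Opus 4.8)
The plan is to prove the two implications separately. For the easy direction (necessity), I would first show that $K_5$ and $K_{3,3}$ are themselves non-planar by Euler's formula $V-E+F=2$, valid for any connected graph cellularly embedded in the sphere. For $K_5$ one has $V=5$, $E=10$; since each face of a simple graph is bounded by at least three edges and each edge borders at most two faces, $3F\le 2E$, forcing $F\le 6$ and hence $V-E+F\le 1<2$, a contradiction. For the bipartite graph $K_{3,3}$ (with $V=6$, $E=9$) the girth is $4$, so the sharper bound $4F\le 2E$ gives $F\le 4$ and again $V-E+F\le 1$. I would then observe that planarity is inherited by subgraphs and is preserved under edge-contraction, so it is closed under taking minors; consequently no planar graph can contain $K_5$ or $K_{3,3}$ as a minor.

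Before turning to the converse, I would reconcile the two formulations in the statement. Since $K_{3,3}$ has maximum degree $3$, containing it as a minor is equivalent to containing a subdivision (topological minor) of it; for $K_5$, whose maximum degree is $4$, I would invoke the standard lemma that any graph with a $K_5$ minor already contains a subdivision of $K_5$ or of $K_{3,3}$. This shows that the minor form and the homeomorphism/contraction form quoted in the theorem coincide.

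For the hard direction, suppose $G$ excludes both $K_5$ and $K_{3,3}$ as minors; the goal is to exhibit a planar embedding. First I would reduce to $3$-connected graphs: a minimal counterexample must be $2$-connected, since otherwise one embeds its blocks separately and glues them at cut vertices, and a $2$-connected but not $3$-connected graph splits across a $2$-separation into smaller pieces whose planar embeddings can be recombined. Thus a minimal counterexample is $3$-connected. For such $G$ with at least five vertices I would apply Tutte's contraction lemma to find an edge $e$ with $G/e$ again $3$-connected; as a minor of $G$, the graph $G/e$ still excludes $K_5$ and $K_{3,3}$, so by induction it is planar, and by Whitney's theorem its spherical embedding is essentially unique. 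I would then attempt to \emph{uncontract} $e$: the contracted vertex sits inside a face whose boundary records the cyclic order of its neighbors, and one tries to split it back into the endpoints of $e$ while preserving planarity. If the split always succeeds, $G$ is planar; if it fails, the obstructing arrangement of neighbors directly displays a subdivision of $K_5$ or $K_{3,3}$, contradicting the hypothesis.

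The hard part will be this uncontraction step. Making precise how the neighbors of the two endpoints of $e$ interleave around the unique embedding of $G/e$, and proving that the \emph{only} way the split can fail is when those neighbors force one of the two forbidden configurations, is where the genuine work lies; it relies essentially on $3$-connectivity to pin down both the embedding and the structure of the face boundaries, together with a careful case analysis of how three or more attachment branches can meet. By comparison, the reduction to the $3$-connected case is routine, but it must be carried out so that the exclusion of $K_5$ and $K_{3,3}$ minors is demonstrably inherited by each piece of the decomposition.
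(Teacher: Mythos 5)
The paper does not prove this statement at all: Kuratowski's Theorem is quoted as a classical background result from the cited graph-theory texts, so there is no in-paper argument to compare yours against. Judged on its own terms, your plan follows the standard modern route (essentially Thomassen's proof). The necessity direction is complete and correct: the Euler-formula counts for $K_5$ ($3F\le 2E$ gives $F\le 6$, so $V-E+F\le 1$) and for $K_{3,3}$ (girth $4$ gives $4F\le 2E$, $F\le 4$) are right, and closure of planarity under subgraphs and edge-contractions correctly transfers non-planarity to anything containing these as minors. Your reconciliation of the minor form with the subdivision form is also the right pair of lemmas (max degree $3$ for $K_{3,3}$; the Wagner-type lemma that a $K_5$ minor forces a $K_5$ or $K_{3,3}$ subdivision).

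The genuine gap is in the sufficiency direction, and you have named it yourself: the uncontraction step. As written, ``if the split fails, the obstructing arrangement directly displays a subdivision of $K_5$ or $K_{3,3}$'' is precisely the statement that must be proved, not a proof of it. Concretely, after contracting $e=xy$ to a vertex $v$ and embedding $G/e$, one must fix the face of $G/e - v$ containing $v$, list the neighbors of $x$ in cyclic order around that face, and show that every neighbor of $y$ not adjacent to $x$ lies in a single segment between consecutive neighbors of $x$ --- otherwise either two neighbors of $y$ interleave with two neighbors of $x$ (yielding a $K_{3,3}$ subdivision on $x,y$ and the four attachment vertices) or $x$ and $y$ share three common neighbors arranged so as to yield a $K_5$ subdivision. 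This case analysis, which leans on $3$-connectivity both for Tutte's contraction lemma and to control the face boundaries (via Whitney's theorem or convexity of the embedding), is the entire content of the hard direction; without it the argument is a correct outline but not a proof. The $2$-separation reduction also needs the explicit device of adding a virtual edge across the separator to each piece and verifying that the pieces still exclude the two minors, which you flag but do not carry out.
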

The graph $K_{5}$ is called the complete graph on five vertices and
$K_{3,3}$ is the $(3,3)$-bipartite graph. Notice that the $K_{5}$
graph is 4-valent and can be expanded as a spin network to obtain
the Petersen graph which is a cubic graph with 10 vertices as depicted
below. If one deletes any of its vertices and the edges incident to
it, one finds a graph homeomorphic to the $(3,3)$-bipartite graph.
Hence, we only need to focus on the latter graph.

\medskip{}

\begin{center}
\xy/r3.0pc/:
{\xypolygon5"C"{}},
"C1"*@{*};"C3"*@{*}**@{-},"C1";"C4"**@{-},"C2"*@{*};"C4"*@{*}**@{-},"C2";"C5"*@{*}**@{-},"C3";"C5"**@{-},"C0"-(0,1.3)*{K_{5}}
\endxy
$\rightarrow$
\xy/l1pc/:
{\xypolygon5"A"{}},
{\xypolygon5"B"{~:{(-2.5,0):}~>{}}},
{\xypolygon5"C"{~:{(-3.75,0):}}},
"B1"*@{*};"C1"*@{*}**@{-},"B2"*@{*};"C2"*@{*}**@{-},"B3"*@{*};"C3"*@{*}**@{-},"B4"*@{*};"C4"*@{*}**@{-},"B5"*@{*};"C5"*@{*}**@{-},
"B1";"A3"**@{-},"B1";"A4"**@{-},"B2";"A5"**@{-},"B2";"A4"**@{-},"B3";"A5"**@{-},"B3";"A1"**@{-},"B4";"A1"**@{-},"B4";"A2"**@{-},
"B5";"A3"**@{-},"B5";"A2"**@{-},
"A0"+(0,3.9)*{\text{Petersen graph}}
\endxy
\par\end{center}

\medskip{}

Closed surfaces, on the other hand, are categorized into orientable
and non-orientable surfaces, e.g. the sphere $S^{2}$, torus $T^{2}$
or the real projective plane $P^{2}$. Any oriented surface is homeomorphic
to the sphere or to the connected sum $T^{2}\#T^{2}\#\dots\#T^{2}$
of a finite number $h$ of tori. Here we will only consider the embeddings
of spin networks in orientable closed surfaces.

Whether a given graph $G$ is embeddable in an orientable surface
$S_{h}$ or not depends on its genus $\gamma(G)$, which is defined
to be the minimum genus of any orientable surface in which $G$ is
embeddable, i.e. $G$ is embeddable in $S_{h}$ if $h\geq\gamma(G)$.
In fact, any graph can be embedded in a surface with enough handles
just by adding a handle at each crossing, but we are rather interested
in cellular embeddings%
\footnote{In fact, if $\gamma(G)=h$, then every embedding of $G$ on $S_{h}$
is cellular, \cite{beineke1997graph}.%
} for which Euler's formula hold,
\begin{equation}
v-e+f=2-2h\label{eq:Eulers-formula-for-graph-embeddings}
\end{equation}
where $v,e,f$ are the number of vertices, edges and faces respectively
and $h$ is the (orientable) genus of $S_{h}$. In this context, a
planar graph has genus $\gamma(G_{planar})=0$.

There is no general formula for calculating the orientable genus of
a given graph, however, for the $(s,r)$-bipartite graph it is given
by
\begin{equation}
\gamma(K_{s,r})=\left\lceil \frac{(r-2)(s-2)}{4}\right\rceil \label{eq:genus-of-graph}
\end{equation}
where $\left\lceil x\right\rceil $ denotes the next integer bigger
than $x$. Hence, $K_{3,3}$ is embeddable in the torus but not in
the sphere since $\gamma(K_{3,3})=\left\lceil \frac{1}{4}\right\rceil =1$,
\cite{beineke1997graph}.

Notice that from the \eqref{Eulers-formula-for-graph-embeddings}
there is a topological constraint to the allowed cellular embeddings
for a given graph. For instance, $K_{3,3}$ has 6 vertices and 9 edges,
thus, we obtain a constraint for the number of faces, $f=5-2h$, since
$f,h>0$. Furthermore, from \eqref{genus-of-graph} we have $h\geq1$,
hence, $f=3$ or $f=1$ in the case where $K_{3,3}$ is embedded%
\footnote{From now on, whenever we refer to an embedding, it is meant a cellular
embedding.%
} in $T^{2}\text{ or }T^{2}\#T^{2}$ respectively.

Is there other information encoded in the graph that can help us to
obtain all possible embeddings? Does the orientation of the vertices
impose a constraint on the embedding? Now, having found the number
of possible faces (or 2-cells) for the embeddings, we want to construct
oriented surfaces such that the cellular embeddings are automatically
realized. In order to achieve this, we need to find the circuits of
the given graph $G$ and attach 2-cells to the regions bounded by
them. 

A \textbf{circuit} is a closed walk%
\footnote{A walk in $G$ is an alternating sequence $v_{1}e_{1}v_{2}e_{2}\dots v_{n-1}e_{n-1}v_{n}$
of vertices and edges of $G$, where every edge $e_{i}$ is incident
with $v_{i}\text{ and }v_{i+1}$, and $v_{i}\neq v_{i+1}$. If $v_{1}=v_{n}$,
then it is a closed walk, \cite{hartsfield2003pearls}. It is important
to notice here that this definition is a property of the graph itself.
We will, however, abuse the use of the language and refer to the regions
bounded by the circuits (cf. \thmref{Rotation-Scheme-Theorem}) as
\textit{embedded circuits}, when they have repeated edges in both
directions, or as \textit{embedded cycles} when no edges are repeated.
To clarify, the difference is that the latter concepts are related
to the embedding of the graph in a surface and the definition given
above is a property of the graph related only indirectly to the embedding
of the graph through \thmref{Rotation-Scheme-Theorem}.%
} in $G$ such that no edge is repeated in the same direction. One
may imagine a circuit as walking along an edge in a certain direction
and, when getting to a vertex, the direction to follow (either clockwise
or counter-clockwise) is given by the orientation, also called rotation,
of the vertex under consideration. Thus, a given configuration of
the orientations of all vertices in the graph, called a rotation scheme,
induces a set of circuits giving rise to a specific cellular embedding.
Hence, all embeddings can also be described by giving the orientation%
\footnote{The orientation is given as a cyclic permutation of the neighbors
encountered while going clockwise around the vertices. In this convention,
the orientation of a vertex $v$ is the equivalent class of even permutations,
in the case of positive orientation denoted $(+1)$, or odd permutations,
in the case of negative orientation denoted $(-1)$, of its neighbors.
Such a definition takes into account the fact that both directions,
anticlockwise and clockwise, can be described by listing the neighbors
in order of their appearance when going \textit{clockwise}, for instance,
if $(abc)$ describes the positive orientation of $\xygraph{!{0;/r1.5pc/:}[u]!{\xcapv@(0)>{a}}*{\bullet}[lr(0.1)]!{\sbendv@(0)>{b}}[ll]!{\sbendh-@(0)>{c}}}$,
then $(acb)$ describes the negative orientation, meaning going around
counter-clockwise, which can also be regarded as permutating the edges
$vc$ and $vb$ giving $\xygraph{!{0;/r1.5pc/:}[u]!{\xcapv@(0)>{a}}*{\bullet}[lr(0.1)]!{\sbendv@(0)>{c}}[ll]!{\sbendh-@(0)>{b}}}$.
Notice that \textit{in the diagram} the listing of neighbors of $v$
is still clockwise.%
} of each vertex and specifying the regions bounded by the circuits
obtained from applying the so called \textbf{rotation rule}: after
the edge $xy$, take the edge $yz$, where $z$ is the successor to
$x$ in the permutation of the neighbors of the vertex $y$, see examples
below. The previous discussion is formalized in the next theorem,
\cite{beineke1997graph}.
\begin{thm}
\textbf{\label{thm:Rotation-Scheme-Theorem}Rotation Scheme Theorem.
}Let $G$ be a connected graph with v vertices and e edges, and let
$\Pi=\{\pi_{1},\pi_{2},\dots,\pi_{v}\}$ be a set of cyclic permutations
of the neighbors of the vertices $\{1,2,\dots,v\}$, i.e. a set of
a given orientation of all vertices. Let $W_{1},W_{2},\dots,W_{f}$
be the circuits obtained by applying the rotation rule to $\Pi$.
Then the circuits are the boundaries of the regions of a cellular
embedding of $G$ in $S_{h}$, with $h=(2-v+e-f)/2$, the genus of
the orientable surface $S_{h}$. Hence, all possible cellular embeddings
of a graph are provided by the rotation schemes.
\end{thm}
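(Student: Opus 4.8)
The plan is to prove the two assertions of the theorem separately: first, that any rotation scheme $\Pi$ produces, via the rotation rule, a genuine cellular embedding of $G$ in a closed orientable surface of the stated genus; and second, that conversely every cellular embedding arises in this way. For the first direction I would build the surface explicitly. Take the graph $G$, regarded as a one-dimensional cell complex, and for each circuit $W_i$ obtained from the rotation rule glue in a closed $2$-cell (a polygon) along $W_i$, respecting the direction in which the circuit was traced. Call the resulting two-complex $X$. The heart of the argument is to verify that $X$ is a closed surface, and the verification is local.

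The local analysis has two parts. Along the interior of each edge: since the rotation rule traces each of the two directed versions of an edge exactly once, each edge lies on the boundary of precisely two face-corners, so gluing those two $2$-cells along the edge gives every interior edge-point a disc neighborhood. At each vertex $v$ of degree $d$: the incident edge-ends carry the cyclic order $\pi_v$, and I would check that the successive corners of the traced faces fit together around $v$ exactly in the cyclic order prescribed by $\pi_v$, so that the $2$-cells close up into a single disc neighborhood of $v$ rather than several separate cones. This vertex computation --- showing that the face-corners at $v$ form a single orbit under the permutation induced by the rotation rule --- is where I expect the main obstacle to lie, since it is the step that genuinely uses the defining ``successor'' clause of the rotation rule and must be set up with careful bookkeeping of the vertex--edge incidences (darts).

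Having established that $X$ is a surface, I would address orientability and the genus. Orienting each $2$-cell by its tracing direction, the fact that each edge is used once in each direction forces the two cells meeting along any edge to induce opposite orientations on it; this is precisely the compatibility condition that makes $X$ orientable. Since $X$ is then a closed orientable surface carrying $v$ vertices, $e$ edges, and $f$ faces by construction, its Euler characteristic is $v-e+f$, and comparing this with $\chi(S_h)=2-2h$ from \eqref{Eulers-formula-for-graph-embeddings} yields $h=(2-v+e-f)/2$. Cellularity is automatic, since every region was filled with an open disc.

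For the converse I would reverse the construction: given a cellular embedding of $G$ in an oriented $S_h$, each vertex inherits a cyclic ordering of its incident edges by reading them off clockwise in a small oriented disc around the vertex, and this defines a rotation scheme $\Pi$. A short check --- tracing the boundary of any face of the embedding and comparing it with the rotation rule --- shows that the circuits generated by $\Pi$ are exactly the face boundaries of the given embedding, so the two constructions are mutually inverse. This establishes the final clause that all cellular embeddings are provided by the rotation schemes.
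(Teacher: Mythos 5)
The paper does not prove this theorem at all: it is imported as a classical result of topological graph theory with a citation to the literature, so there is no in-paper proof to compare against. Your outline is the standard Heffter--Edmonds--Ringel argument and is correct in its overall architecture. The step you single out as the main obstacle --- that the face-corners at a vertex $v$ form a single orbit --- does go through, and for exactly the reason you hint at: a corner at $v$ is a pair of darts $(vx,\, v\pi_v(x))$, the next corner encountered going around $v$ is $(v\pi_v(x),\, v\pi_v^2(x))$, and since $\pi_v$ is by hypothesis a \emph{single} cyclic permutation of all $d$ neighbors, these $d$ corners chain into one cycle, giving a single disc neighborhood. Likewise your claim that each directed edge is traced exactly once follows because the rotation rule is a bijection on the set of $2e$ darts, so its orbits partition them. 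One small point worth making explicit in a full write-up of the converse: the bijection is between rotation schemes and cellular embeddings in \emph{oriented} surfaces up to orientation-preserving homeomorphism; reversing the global orientation replaces every $\pi_v$ by its inverse, which is why the paper's examples track the sign $(\pm 1)$ of each vertex rather than an absolute ordering.
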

Let us consider the graph $K_{3,3}$ with $v=6,\, e=9$. This is a
graph with trivalent vertices, hence, there are two possible rotations%
\footnote{In fact, if a vertex $v$ has degree $d$, then there are $(d-1)!$
different rotations of $v$.%
} for each of the six vertices. As a consequence, there are $2^{6}=64$
different sets $\Pi_{i=1,\dots,64}$. The question that arises immediately
is whether all these sets induce topological inequivalent embeddings
or not. In other words, if we disregard the labeling of the vertices,
how many different embeddings of $K_{3,3}$ in the torus or the double-torus
exist?

Before making a general claim, let us work out some examples to illustrate
the construction of embeddings by the rotation rule in order to understand
the relation between the set of orientations and the embeddings induced
by them. We will achieve this by listing the vertices and their neighbors
and from this list extract the circuits in the embedding using the
rotation rule, \cite{beineke1997graph,hartsfield2003pearls}. 

Notice that for a general $(r,s)$-bipartite graph the defining characteristics
are: \textit{(i)} The $r$ vertices corresponding to a set, say $R$,
are adjacent to each of the $s$ vertices corresponding to another
set, say $S$, and \textit{(ii)} $R\cap S=\emptyset$. In the case
of $K_{3,3}$ each of the odd vertices, $R=\{1,3,5\}$, are adjacent
to each of the even vertices, $S=\{2,4,6\}$. Thus, we denote (up
to cyclic permutations) positive orientations of even and odd vertices
as $(135)\text{ and }(246)$ respectively, and $(153)\text{ and }(264)$
for negative orientations. For instance, a standard way of picturing
$K_{3,3}$ with minimal crossing is

\medskip{}

\begin{center}
\def\objectstyle{\scriptscriptstyle}
\xy/r2pc/:
{\xypolygon4"L"{~:{(1,0):(0,3.5)::}~>{}{\bullet}}},
"L1";"L4"**{}?<>(.5)="L5"*{\bullet},"L2";"L3"**{}?<>(.5)="L6"*{\bullet},
"L2";"L1"**@{-},"L1";"L6"**@{-},"L6";"L5"**@{-},"L5";"L3"**@{-},"L3";"L4"**@{-},
"L4";"L2" **\crv{(1,-4) & (-4,-4.5)&(-1,2.5)},
"L4";"L6" **\crv{(.8,-3.5) & (-2,-2.5) & (-1,-.5)},
"L2";"L5" **@{-},
"L1";"L3" **\crv{(4,0)},
"L2"+(0,.2)*{1},"L1"+(0,.2)*{2},"L6"+(-.1,.2)*{3},"L5"+(.1,.2)*{4},"L3"+(0,.2)*{5},"L4"+(0,.2)*{6},
"L2"-(.2,.25)*{(246)},"L1"+(.4,0)*{(153)},"L6"+(0,-.3)*{(246)},"L5"+(.25,-.2)*{(153)},"L3"+(-.1,-.2)*{(264)},"L4"+(.4,-.1)*{(135)}
\endxy
\par\end{center}

\medskip{}

This configuration has orientations $(246)$ for the vertices $1$
and $3$, $(264)$ for vertex $2$ and $(153)$ for the vertices $2$
and $4$, $(135)$ for vertex $6$. 
\begin{example}
\label{exa:case-3-3}Consider the case where the vertices in each
set have the same orientation, i.e. either $(+1,+1,+1)$ or $(-1,-1,-1)$.
For instance, the vertices $\{1,3,5\}$ have negative orientation
while the vertices $\{2,4,6\}$ have all positive orientation:\\

\begin{tabular}{|c|c|c|c|c|c|c|}
\hline 
Vertex & 1 & 2 & 3 & 4 & 5 & 6\tabularnewline
\hline 
\hline 
Neighbors/Orientation & (264) & (135) & (264) & (135) & (264) & (135)\tabularnewline
\hline 
\end{tabular}\\

From this information we extract the circuits which will help us to
construct the embedding corresponding to this configuration. For instance,
take the edge $(12)$ and apply the rotation rule on it, i.e. the
neighbor of $2$ coming after $1$ in the cyclic permutation $(135)$
is $3$, hence, the next edge in the walk is $(23)$. Apply again
the rule to get $(36)$ and so on. After some steps, depending on
how long the walk is, one gets to the edge where the procedure started,
meaning that one has to stop and apply the same procedure to another
edge different than the ones encountered in the previous walk. In
this specific configuration, this algorithm results in the following
disjoint circuits,\end{example}
\begin{enumerate}
\item $(12)\rightarrow(23)\rightarrow(36)\rightarrow(65)\rightarrow(54)\rightarrow(41)\rightarrow(12)$;
\item $(25)\rightarrow(56)\rightarrow(61)\rightarrow(14)\rightarrow(43)\rightarrow(32)\rightarrow(25)$;
\item $(21)\rightarrow(16)\rightarrow(63)\rightarrow(34)\rightarrow(45)\rightarrow(52)\rightarrow(21)$. 
\end{enumerate}
Notice that there is a difference between the ``side'' $(xy)$ and
$(yx)$ reflecting the direction of the walk, hence, there are 18
``sides'' available to build the circuits. In this case, there are
three regions with six sides as boundaries, thus, the embedding has
three faces and corresponds to an embedding in the torus as depicted
below:

\medskip{}

\begin{center}
\def\objectstyle{\scriptscriptstyle}
\xy/r2pc/:
{\xypolygon4"T"{~:{(3,0):(0,.8)::}}},{\xypolygon6"A"{~:{(1,0):(0,1)::}{\bullet}}},
"T1";"T2"**{}?<>(.5)="T5"*{\circ},"T4";"T3"**{}?<>(.5)="T6"*{\circ},
"T1";"T4"**{}?<>(.5)="T7"*{*},"T2";"T3"**{}?<>(.5)="T8"*{*},
"A1";"T7"**@{-},"A4";"T8"**@{-},"A2";"T5"**@{-},"A5";"T6"**@{-},
"T7";"T4"**{}?<>(.5)="T9"*{=},"T8";"T3"**{}?<>(.5)="T10"*{=},
"T5";"T2"**{}?<>(.5)="T11"*{\times},"T6";"T3"**{}?<>(.5)="T12"*{\times},
"T9";"A6"**@{-},"T10";"T12"**\crv{"T12"+(0,1)},"T11";"A3"**\crv{"T11"-(.3,.75)},
"A1"+(.2,.2)*{1},"A2"+(.2,.2)*{2},"A3"+(0,.2)*{3},"A4"-(.2,.2)*{6},"A5"-(.2,.2)*{5},"A6"-(0,.2)*{4}
\endxy
\par\end{center}

\medskip{}

\begin{example}
\label{exa:case-1-1}Consider the case where one vertex has the opposite
orientation relative to the two other vertices of the same set. For
instance, the case where vertex $1$ and $2$ have positive orientation,
$(246)$ and $(135)$ respectively, and the rest have negative orientation:\\

\begin{tabular}{|c|c|c|c|c|c|c|}
\hline 
Vertex & 1 & 2 & 3 & 4 & 5 & 6\tabularnewline
\hline 
\hline 
Neighbors/Orientation & (246) & (135) & (264) & (153) & (264) & (153)\tabularnewline
\hline 
\end{tabular}\\

Using the rotation rule we obtain the following circuits,\end{example}
\begin{enumerate}
\item $(12)\rightarrow(23)\rightarrow(36)\rightarrow(61)\rightarrow(12)$;
\item $(21)\rightarrow(14)\rightarrow(45)\rightarrow(52)\rightarrow(21)$;
\item $(32)\rightarrow(25)\rightarrow(56)\rightarrow(63)\rightarrow(34)\rightarrow(41)\rightarrow(16)\rightarrow(65)\rightarrow(54)\rightarrow(43)\rightarrow(32).$
\end{enumerate}
Notice that this time, we obtain two circuits of length four and a
single one of length ten. Hence, the 18 sides available form three
faces and the embedding is in a torus:

\medskip{}

\begin{center}
\def\objectstyle{\scriptscriptstyle}
\xy/r2pc/:
{\xypolygon4"T"{~:{(3,0):(0,.8)::}}},{\xypolygon6"A"{~:{(1,0):(0,1)::}{\bullet}}},
"T1";"T2"**{}?<>(.5)="T5"*{\circ},"T4";"T3"**{}?<>(.5)="T6"*{\circ},
"T1";"T4"**{}?<>(.5)="T7","T2";"T3"**{}?<>(.5)="T8",
"A1";"A4"**@{-},"A3";"T5"**@{-},"A6";"T6"**@{-},
"T7";"T4"**{}?<>(.5)="T9"*{\times},"T8";"T3"**{}?<>(.5)="T10"*{\times},
"T10";"A5"**@{-},"T9";"A2"**\crv{"A6"+(.5,0)&"A2"+(1,0)},
"A1"-(.15,.15)*{1},"A2"+(0,.2)*{4},"A3"+(0,.2)*{5},"A4"-(.2,.15)*{2},"A5"-(.15,.2)*{3},"A6"-(0,.2)*{6}
\endxy
\par\end{center}

\medskip{}

\begin{example}
\label{exa:case-3-1}Finally, consider the case where the orientation
of all vertices in one set is the same while in the other set we have
one vertex with the opposite orientation relative to the other two
vertices. For instance, the case where all odd vertices have orientation
$(246)$ and vertex $2$ has positive orientation as well, while the
vertices $4$ and $6$ have orientation $(153)$:\\

\begin{tabular}{|c|c|c|c|c|c|c|}
\hline 
Vertex & 1 & 2 & 3 & 4 & 5 & 6\tabularnewline
\hline 
\hline 
Neighbors/Orientation & (246) & (135) & (246) & (153) & (246) & (153)\tabularnewline
\hline 
\end{tabular}\\

In this case we obtain, after using the described algorithm, only
one circuit with 18 sides, 
\begin{multline*}
(12)\rightarrow(23)\rightarrow(34)\rightarrow(41)\rightarrow(16)\rightarrow(65)\rightarrow(52)\rightarrow(21)\rightarrow(14)\rightarrow(45)\rightarrow(56)\rightarrow\dots\\
\dots\rightarrow(63)\rightarrow(32)\rightarrow(25)\rightarrow(54)\rightarrow(43)\rightarrow(36)\rightarrow(61)\rightarrow(12)
\end{multline*}
where all sides are walked exactly once%
\footnote{Notice that a circuit induced in this way may have repeated vertices
and edges used in both directions, however, if the edge is repeated
in the same direction the algorithm must stop, \cite{hartsfield2003pearls}.%
}. This configuration thus corresponds to an embedding in $T^{2}\#T^{2}$
which can be represented in a plane in a similar manner as the torus,
\[
\xy/r3pc/:{\xypolygon8"T"{~:{(3,0):(0,1)::}}},{\xypolygon6"A"{~:{(1.5,0):(0,1)::}~>{}{\bullet}}},"T1";"T2"**{}?<>(.5)="T9"*{\times}+(.3,-.1)*{\beta},"T2";"T3"**{}?<>(.5)="T10"*{\circ}+(.5,.2)*{\alpha},"T3";"T4"**{}?<>(.5)="T11"*{*}+(.1,.3)*{\delta},"T4";"T5"**{}?<>(.5)="T12"*{+}+(-.2,-.4)*{\gamma},"T5";"T6"**{}?<>(.5)="T13"*{*}+(.1,-.3)*{\delta},"T6";"T7"**{}?<>(.5)="T14"*{+}+(.3,-.2)*{\gamma},"T7";"T8"**{}?<>(.5)="T15"*{\times}+(.3,.1)*{\beta},"T8";"T1"**{}?<>(.5)="T16"*{\circ}+(.2,.5)*{\alpha},"A1"+(.2,.2)*{1},"A2"+(.2,.2)*{2},"A3"+(0,.2)*{3},"A4"-(.2,.2)*{4},"A5"-(.2,.2)*{6},"A6"+(.2,0)*{5},"T13";"A4"**@{-},"A4";"A3"**@{-},"A3";"A2"**@{-},"A2";"A1"**@{-},"A1";"A5"**@{-},"A5";"A6"**@{-},"A6";"T15"**@{-},"A2";"T9"**@{-},"A1";"T16"**@{-},"T10";"T11"**\crv{"T10"-(1,1)},"A6";"T14"**@{-},"A4";"T12"**@{-},"T11";"T4"**{}?<>(.5)="T17"*{=},"T12";"T5"**{}?<>(.5)="T18"*{/},"T5";"T13"**{}?<>(.5)="T19"*{=},"T6";"T14"**{}?<>(.5)="T20"*{/},"A3";"T17"**\crv{"T17"+(.3,-.3)},"T18";"T19"**\crv{"T18"+(.5,-.5)},"T20";"A5"**\crv{"A5"-(.2,0)},\endxy
\]
where the Greek letters denote the borders of the frame that have
to be glued together in order to obtain a double torus and the symbols
on them identify corresponding points.
\end{example}
Now, define the \textbf{value} $\mathfrak{v}$\textbf{ of a set of
vertices} as the modulus of the sum of orientations $\pm1$ of the
vertices in that set, e.g. the value of $\{1,3,5\}$ with orientation
$(-1-1-1)$ is $|-3|=3$. The definition is such that, if the orientations
of all vertices in a given set change, then the value remains invariant.
For instance, one can achieve a change of orientation of all vertices
in, say, the set $R=\{1,3,5\}$, e.g. $(+1+1-1)$, by an odd permutation
of the vertices in $S=\{2,4,6\}$ such that $(+1+1-1)\mapsto(-1-1+1)$
but $\mathfrak{v}_{++-}=|+1|=|-1|=\mathfrak{v}_{--+}$. In fact, by
permutations of the vertices in a set, one can construct all equivalent
diagrams%
\footnote{One has to consider the operation $R\rightleftarrows S$ as well.%
}, i.e. giving the same embedding, since these operations do not change
the 2-cells of the embedding, it merely results in a permutation of
the vertices in it.

Observe that the three cases in the examples above are the only cases
possible if we consider only the \textit{relative} orientation between
vertices of the same set, in which case the value is either $\mathfrak{v}=3$,
when all vertices have the same orientation, or $\mathfrak{v}=1$,
when one of the vertices has the opposite orientation relative to
the other two in the same set. The value of each set is independent
of each other, hence, we have the cases where the pair of values are
$(3,3)$, $(3,1)$, $(1,3)$ and $(1,1)$. However, since $K_{3,3}$
is symmetric under exchange of sets $R\leftrightarrow S$ preserving
the orientation, we can regard $(1,3)$ and $(3,1)$ as equivalent
cases.

Another way of looking at this is to consider the partition of 18
in summands with some constraints. Each of the summands represents
a circuit and their value represent the length of the circuit. The
defining characteristics of the $(3,3)$-bipartite graph do not allow
the construction of circuits with an odd number of edges since this
would mean that two vertices of the same set are adjacent. Thus, the
partition of 18 cannot contain any odd numbers. It follows that the
smallest possible circuit has length 4. Furthermore, the only number
of summands in the partition can be 1 or 3 since they represent the
regions of the embeddings. From these restrictions we conclude that
the only partitions of 18 allowed are $18,\;6+6+6,\;4+4+10$ and $4+8+6$.
However, the latter partition is not realized. To see this, notice
that it is not possible to construct a 4-circuit which is not a 4-cycle%
\footnote{A \textbf{cycle} is a circuit which does not have any repeated edges
in any direction.%
} since this would mean that either one edge is repeated, in which
case the edge would need to have two loops at each vertex, or two
edges are repeated, this is not possible since all vertices in the
graph are 3-valent. Thus, the only possibility is to have a 4-cycle;
this implies automatically the existence of another 4-cycle. To see
this, notice that the 4-cycle has two vertices of each set, therefore,
there are two more vertices available to construct the graph, one
of each kind. The defining characteristics of $K_{3,3}$ impose the
constraint that these two vertices must be adjacent to each other
and to the corresponding vertices in the original 4-cycle. This leaves
no other possibility but to construct another 4-cycle, in contradiction
to the partition $4+8+6$.
\begin{rem*}
Observe the symmetry of the bipartite graph under permutation of its
vertices reflected in the pair of values of the sets as well as in
the partition of 18, thus, we may call the $(3,3)$ case ``maximal
symmetric'', the $(1,1)$ case ``minimal symmetric'' and the $(3,1)=(1,3)$
case ``asymmetric''. Therefore, we can think of this pair of values
as the ``degree of symmetry'' of the graph.
\end{rem*}
From the examples \exaref{case-3-3}, \exaref{case-1-1} and \exaref{case-3-1}
we see that the partitions $[6+6+6],\;[4+4+10]$ and $18$ correspond
to the pair of values $(3,3),\:(1,1)$ and $(1,3)$ respectively.
We say that the maximal (minimal) symmetric graph has a $[6+6+6]$-type
($[4+4+10]$-type) embedding and the asymmetric graph has a $18$-type
embedding. Therefore we can say that the type of embedding is only
dependent on the ``degree of symmetry'' of the graph given by the
pair of values of the two sets $R$ and $S$. In other words, the
embeddings are topologically invariant under permutations acting on
the sets of even and odd vertices. Odd permutations on one set, merely
change the orientation of all vertices in the other set, in which
case the value of the set is not affected. Even permutations on one
set only affect the cyclic order of the orientations in that set,
e.g. if we have an orientation $(-1+1+1)$ of the set $R$ an even
permutation acting on $R$ would only result in, say, $(-1+1+1)\mapsto(+1+1-1)$. 

Thus, from all 64 possible configurations of the orientations of vertices
in $K_{3,3}$ only three of them induce inequivalent embeddings. If
the pair of values is $(3,3)$, then there are 4 equivalent configurations
which induce a $[6+6+6]$-type embedding; either all 6 vertices have
positive (or negative) orientation or 3 vertices from one set have
positive (or negative) orientation while the vertices from the other
set have opposite orientation. Therefore, we are left with $64-4=60$
configurations; $36$ from them belong to the case where the pair
of value is $(1,1)$. This is a $[4+4+10]$-type embedding, hence,
one of the vertices on each set has the opposite orientation relative
to the vertices from the set which belongs to, i.e. the sets have
the orientations of the form $(+1-1-1)$ and cyclic, or of the form
$(-1+1+1)$ and cyclic. Therefore, for each relative orientation there
are 3 cases, which make $3\times2=6$ for each set. The rest $24$
of the configurations belong to the case where the pair of values
is $(3,1)=(1,3)$. This gives an embedding in the double torus. There
are 6 cases where the value of a set is 1 and 2 cases where the value
is 3, hence, for each of the pairs $(1,3)$ and $(3,1)$ there are
12 configurations to consider.

Finally, we can summarize the above discussion by the following statement,
\begin{fact}
If the value of the two disjoint sets of vertices in $K_{3,3}$ is
unequal, then the cellular embedding of $K_{3,3}$ corresponds to
an embedding in $T^{2}\#T^{2}$; otherwise the only cellular embeddings
of $K_{3,3}$ are in the torus $T^{2}$, such that it is a $[6+6+6]$-type
embedding for the $(3,3)$-value or a $[4+4+10]$-type embedding for
the $(1,1)$-value.
\end{fact}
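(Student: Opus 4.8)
The statement is a synthesis of the preceding analysis, so the plan is to assemble it from three ingredients: the Rotation Scheme Theorem, the symmetry reduction of the $64$ rotation schemes, and the combinatorial constraints on circuit lengths. First I would invoke \thmref{Rotation-Scheme-Theorem}: every cellular embedding of $K_{3,3}$ arises from one of its rotation schemes, the number of faces $f$ equals the number of circuits produced by the rotation rule, and the genus is $h=(2-v+e-f)/2=(5-f)/2$. Combined with Euler's formula \eqref{Eulers-formula-for-graph-embeddings} and the constraints $f,h>0$ and $h\geq\gamma(K_{3,3})=1$ derived earlier, this forces $f\in\{1,3\}$, with $f=3\Leftrightarrow h=1$ (an embedding in $T^{2}$) and $f=1\Leftrightarrow h=2$ (an embedding in $T^{2}\#T^{2}$). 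Hence the entire statement reduces to determining the number of circuits as a function of the rotation scheme.

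The key step is to show that this number depends only on the unordered pair of values $(\mathfrak{v}_{R},\mathfrak{v}_{S})$. I would argue that the equivalences already described preserve both the embedding (up to homeomorphism) and the pair of values: even permutations within a set only permute the cyclic order of its orientations, odd permutations within one set flip all orientations of the other set (leaving $\mathfrak{v}$ invariant by its very definition), and the exchange $R\leftrightarrow S$ is a symmetry of $K_{3,3}$. Since each trivalent vertex admits exactly two rotations, the value of a $3$-vertex set is $3$ (all equal) or $1$ (a two-to-one split), so up to these equivalences there are precisely three cases, $(3,3)$, $(1,1)$ and $(3,1)\sim(1,3)$. It then suffices to evaluate one representative of each class.

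To pin down the circuit count for each class I would combine a direct computation with a counting constraint. On the combinatorial side, the $18$ directed sides must be partitioned into circuit lengths; bipartiteness forbids odd circuits, the minimal circuit has length $4$, and the number of parts is $1$ or $3$, which leaves only $18$, $6+6+6$, $4+4+10$ and $4+8+6$ as candidate partitions. The partition $4+8+6$ is excluded because any $4$-circuit must be a $4$-cycle (a single repeated edge would force loops, and two repeated edges are impossible at $3$-valent vertices), and a $4$-cycle in $K_{3,3}$ immediately forces a second, disjoint $4$-cycle. Running the rotation rule on the representatives of \exaref{case-3-3}, \exaref{case-1-1} and \exaref{case-3-1} then matches $(3,3)\mapsto[6+6+6]$, $(1,1)\mapsto[4+4+10]$ and $(3,1)\mapsto 18$, giving $f=3,3,1$ respectively. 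Feeding these into $h=(5-f)/2$ yields the torus in the first two cases and the double torus in the third, which is exactly the claim.

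The main obstacle is the middle step: proving rigorously that the circuit count is constant on each value-class rather than merely agreeing on the three computed representatives. The cleanest route I see is to verify directly that the rotation rule commutes with the three equivalence operations above, i.e. that each of them carries circuits to circuits of the same lengths, so that the partition (and hence $f$) is genuinely an invariant of $(\mathfrak{v}_{R},\mathfrak{v}_{S})$; the remaining genus computation is then purely mechanical.
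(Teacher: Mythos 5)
Your proposal is correct and follows essentially the same route as the paper: the Euler/rotation-scheme constraint forcing $f\in\{1,3\}$, the reduction of the $64$ rotation schemes to three value-classes via permutations within and between the sets $R$ and $S$, the partition-of-$18$ argument excluding $4+8+6$, and the verification on the three worked examples \exaref{case-3-3}, \exaref{case-1-1} and \exaref{case-3-1}. The step you flag as the main obstacle---that the circuit count is genuinely constant on each value-class---is exactly the step the paper also leaves at the level of assertion (it claims that permutations of the vertices realize all configurations with a given pair of values without changing the $2$-cells), so your proposed check that the rotation rule commutes with these equivalence operations would, if carried out, actually tighten the paper's own argument rather than depart from it.
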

This result is important since it will allow us to extract information
of the terms needed in the evaluation of non-planar spin networks
to account for their topology%
\footnote{We mean by the topology of a graph, the topology of the surface in
which the graph is embedded.%
}. The graph contains topological information about the surface in
which it is cellular embeddable and we can use recoupling theory to
extract that information. The reason for this is that we are considering
only cellular embeddings and we use all the information contained
in the graph (number of edges, vertices and their orientation) to
build the surfaces by the Rotation Scheme. Hence, the information
of the topology of the surface must be contained in the graph itself;
in other words, by reducing the graph in the embedding, we receive
a factor in the evaluation that reflects the information of the graph
being non-planar. That is why it is important to consider only cellular
embeddings, the faces are only 2-cells homeomorphic to discs with
no information about the global topology. For instance, in the case
of the $K_{4}$ graph we have two cellular embeddings in the torus,
one with an embedded 3- and another with an embedded 4-cycle as cells
and both wrapping the two circles of the torus. Both cellular embeddings
give, in fact, different evaluations, however only up to a constant
involving powers of $q$ (or $A$). In fact, these spin networks are
contained in $K_{3,3}$, in the sense that reducing the graph of $K_{3,3}$
in the torus via Moussouris' algorithm leads to such diagrams. We
will call the embedding of the tetrahedron in the torus with an embedded
3-cycle the \textit{toroidal coupling coefficient}.

There are of course (non-cellular) embeddings of a graph in surfaces
with higher genus, however, it is not the graph containing the information
about the topology of the surface but the surface itself. If we consider
a non-cellular embedding in the torus of the complete graph on 4 vertices,
i.e. the tetrahedron, and we \textquotedblleft{}cut\textquotedblright{}
the surface along the edge of the graph wrapping the circle of the
torus, we will get a surface which is not homeomorphic to a disc and
which contains the information about the topology of the torus. Thus,
the graph in that configuration has no information about a non-trivial
topology.

Due to the classification of closed oriented surfaces we expect that
the information extracted from the torus is sufficient to extend Moussouris'
algorithm for the evaluation of planar spin networks to the non-planar
case. We believe that by knowing the evaluation of the spin network
corresponding to the torus we can use it to evaluate all spin networks
with higher genus in terms of products of this evaluation. To evaluate
these spin networks it would be necessary to arrange them such that
it is possible to \textquotedblleft{}cut\textquotedblright{} their
components (using the generalized Wigner-Eckart theorem, cf. \cite{moussouris1983quantum})
corresponding to each handle of the oriented surface and evaluate
each torus separately, this would give a sum of products of toroidal
symbols%
\footnote{cf. Def. \vref{The-toroidal-symbol}.%
}.

\section{The Evaluation of Non-planar Spin Networks\label{sec:non-planar-SN}}

In this section we discuss Moussouris' Decomposition Theorem and present
its algorithm for the evaluation of planar spin networks, which relates
these to the Ponzano-Regge partition function. We then give an improved
version of the theorem and apply this algorithm to the graph $K_{3,3}$
in order to extract the information needed to extend it to non-planar
spin networks, i.e. we give the explicit form of the toroidal phase
factor for the $q$-deformed case and discuss what needs to be done
to achieve a generalization of the mentioned theorem.

\subsection{The Decomposition Theorem\label{sub:The-Decomposition-Theorem}}

In \cite{moussouris1983quantum} J. P. Moussouris proved a theorem
which relates the spin networks with the Ponzano-Regge theory by reducing
a recoupling graph%
\footnote{Recall that a recoupling graph of a group $G$ is a labelled 3-valent
graph representing a contraction of tensors of $G$. In the following,
the term ``recoupling graph'' will denote such a graph together
with an orientation of its vertices.%
} to a sum of products of coupling coefficients. This reduction, known
as the Decomposition Theorem, gives an evaluation of the spin network
only dependent on the labeling of the graph, as in \cite{turaev1992state}
for a manifold with boundary.

First, we give the explicit form of the identities used in Moussouris'
algorithm in the form of the following definition.
\begin{defn}
\textit{\label{Moussouris-algorithm}Moussouris' algorithm} is defined
to be the application of a finite number of the following operations,
\begin{itemize}
\item The crossing identity:
\[
\xygraph{!{0;/r2pc/:}[d]!{\sbendh-@(0)>{a}}[ul]!{\sbendv@(0)<{b}}*{\bullet}!{\xcaph[1.5]@(0)|{j}}[r(0.5)]*{\bullet}!{\sbendh@(0)>{c}}[dl]!{\sbendv-@(0)<{d}}}=\sum_{i}\hspace{1em}\alpha_{i}\xygraph{!{0;/r2pc/:}[l][u(1.55)]!{\sbendv@(0)<{b}}*{\bullet}!{\xcapv[1.5]@(0)|{i}}[u]!{\sbendh@(0)>{c}}[d(3.5)][l(2)]!{\sbendh-@(0)>{a}}*{\bullet}!{\sbendv-@(0)<{d}}}
\]
where $\alpha_{i}=\Delta_{i}\Bigl\{\begin{array}{ccc}
a & b & i\\
c & d & j
\end{array}\Bigr\}_{q}$; $\Delta_{i}=(-1)^{2i}[2i+1]$ and $[*]$ denotes the quantum integer
defined as $\left[n\right]=\frac{q^{n}-q^{-n}}{q-q^{-1}}$.
\item The excision identity:
\[
\xygraph{[d(0.35)]!{\sbendh@(0)|{j}}!{\sbendv@(0)|{k}}[ll]!{\xcaph[2]@(0)|{l}}[uu]!{\xcapv@(0)|{c}}[lldd]!{\sbendh-@(0)|{b}}[rr]!{\sbendv@(0)|{^{a}}}}=\Bigl\{\begin{array}{ccc}
a & b & c\\
j & k & l
\end{array}\Bigr\}_{q}\xygraph{*{\bullet}!{\sbendh-@(0)|{b}}[ld]!{\xcapv@(0)|{c}}[uul]!{\sbendv@(0)|{^{a}}}}
\]
 
\end{itemize}
\end{defn}
There are two versions of the mentioned theorem which we will present
and analyze in this section in order to understand how the expansion
of the algorithm for evaluating non-planar spin networks could be
done. The second version of the theorem, called network version, is
more general than the first one since it does not assume the spin
network to be planar, however, it assumes implicitly the existence
of an embedded cycle for the recoupling graph $F$ to be reduced.
This implies that the embedding of the graph in some surface has at
least two 2-cells since a cycle induces a region homeomorphic to a
disc by using only one side of each edge in the Rotation Scheme. First,
we present without proof the planar version of the theorem found in
\cite{moussouris1983quantum} and after a short discussion we arrive
to the heart of this paper and give the improved network version and
its proof.
\begin{thm}
\textbf{\textup{Planar version of the Decomposition Theorem:}}

Let $F$ be a \textbf{planar }recoupling graph and $D(F)$ its dual
relative to a particular embedding in the sphere. Let $C(F)$ be a
combinatorial 3-manifold produced by dissecting $D(F)$ with internal
edges $x_{1},x_{2},\dots,x_{p}$ into tetrahedra $T_{1},\dots,T_{q}$.
Then, the evaluation of the recoupling graph is given by the amplitude
\[
\Psi(F)=\sum_{x_{1},\dots,x_{p}}\prod_{j=1}^{p}[x_{j}]\prod_{k=1}^{q}[T_{k}]
\]
where $[x_{j}]$ is the loop-value of the edge $x_{j}$ and the $[T_{k}]$'s
are the coupling coefficient associated with the tetrahedra $T_{k}$.
\end{thm}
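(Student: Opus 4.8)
The plan is to prove the Planar Decomposition Theorem by induction on the complexity of the recoupling graph $F$, using the two identities of \thmref{Moussouris-algorithm} as the engine of the reduction, and interpreting each identity geometrically on the dual $D(F)$. The key observation is that since $F$ is planar, it is embedded in $S^{2}$ with every face a disc, so its dual $D(F)$ is a genuine combinatorial $2$-complex on the sphere, and dissecting $D(F)$ into tetrahedra $T_{1},\dots,T_{q}$ by internal edges $x_{1},\dots,x_{p}$ is well-defined. The strategy is to show that each elementary move of Moussouris' algorithm corresponds, on the dual side, to gluing in one tetrahedron (the excision identity, which produces a factor $\bigl\{\begin{smallmatrix}a & b & c\\ j & k & l\end{smallmatrix}\bigr\}_{q}$, i.e. exactly one $6j$-symbol $[T_{k}]$) or to introducing an internal edge with its loop-value $[x_{j}]=\Delta_{x_{j}}$ summed over the internal label (the crossing identity, whose coefficient $\alpha_{i}=\Delta_{i}\bigl\{\begin{smallmatrix}a & b & i\\ c & d & j\end{smallmatrix}\bigr\}_{q}$ packages both a loop-value and a coupling coefficient). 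Tracking these contributions through the reduction should assemble precisely the claimed amplitude.

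First I would set up the base case: a single tetrahedron, whose recoupling graph is the theta- or tetrahedral net, evaluates directly to a single $6j$-symbol $[T_{1}]$ with no internal edges, matching the formula with $p=0$, $q=1$. Next I would establish the inductive step. Given a planar $F$ with more than one tetrahedron in its dissected dual, I would locate a face of the embedding bounded by an embedded cycle (planarity guarantees at least two $2$-cells, so such a reducible region exists) and apply the excision identity to cut off the corresponding tetrahedron $T_{k}$; this splits off one factor $[T_{k}]$ and leaves a strictly simpler planar recoupling graph $F'$ with dual $D(F')$ having one fewer tetrahedron. Where the excision is not immediately applicable, the crossing identity first redraws a pair of strands, inserting an internal edge $x_{j}$, summing over its label, and contributing the $[x_{j}]$ factor; the geometric content is that this subdivides a quadrilateral region of $D(F)$ into two triangles by inserting the diagonal dual to $x_{j}$. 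By the induction hypothesis $\Psi(F')=\sum_{x_{1},\dots,x_{p-1}}\prod_{j}[x_{j}]\prod_{k}[T_{k}]$, and reinstating the split-off factors reconstructs the full sum for $\Psi(F)$.

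The step I expect to be the main obstacle is the bookkeeping that shows the elementary moves can \emph{always} be scheduled so as to exhaust $F$ while matching the chosen tetrahedral dissection $T_{1},\dots,T_{q}$ of $D(F)$ exactly — in other words, that the algebraic reduction and the geometric dissection stay in lockstep, each internal edge $x_{j}$ of the dissection appearing as the summed internal label of exactly one crossing move and each tetrahedron $T_{k}$ arising from exactly one excision. This requires knowing that in the planar case a reducible cycle is available at every stage, which is where planarity is used essentially: the sphere embedding always furnishes a face bounded by a cycle to which excision applies. Indeed, the discussion preceding the theorem flags precisely this point — the more general network version must \emph{assume} the existence of an embedded cycle, whereas here planarity supplies it for free via Euler's formula \eqref{Eulers-formula-for-graph-embeddings} with $h=0$. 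I would therefore make the existence of such a cycle the load-bearing lemma, and verify that the coefficients $\alpha_{i}$ and $\bigl\{\begin{smallmatrix}a & b & c\\ j & k & l\end{smallmatrix}\bigr\}_{q}$ multiply together across the induction to give exactly $\prod_{j}[x_{j}]\prod_{k}[T_{k}]$, with no stray powers of $q$ left over.
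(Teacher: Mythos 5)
First, a point of comparison: the paper does not actually prove this statement. It is presented explicitly ``without proof'' as a quotation of Moussouris' planar version, followed only by a short gloss --- that successive Alexander moves (dually, the elimination of a 3-cycle and the crossing identity) introduce the internal edges dissecting $D(F)$ into tetrahedra, and that the Biedenharn--Elliott identity together with the orthogonality of the $6j$-symbols guarantees independence of the decomposition. Your overall strategy (induction driven by the excision and crossing identities, each move tracked dually as a tetrahedron or an internal edge, with planarity supplying a reducible embedded cycle at every stage) is consistent with that gloss and with the proof the paper does give for its network version of the theorem, where the existence of an embedded cycle is likewise the load-bearing issue. So the skeleton is right.

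The genuine gap is in how you handle the quantifier over dissections. The theorem fixes an \emph{arbitrary} combinatorial 3-manifold $C(F)$ obtained by dissecting $D(F)$ into tetrahedra $T_{1},\dots,T_{q}$ with internal edges $x_{1},\dots,x_{p}$, and asserts the amplitude formula for that dissection. Your induction produces only the particular dissection generated by your chosen schedule of excision and crossing moves, and your proposed remedy --- scheduling the moves so that the algebraic reduction ``stays in lockstep'' with the given dissection --- will not work in general: not every tetrahedral dissection of the ball bounded by $D(F)$ arises as the trace of a sequence of boundary reductions (for instance, dissections with internal vertices, or whose internal edges are not realizable one at a time by a crossing move on the current boundary graph). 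The standard way to close this, and the one the paper's remark points to, is to prove the formula for the single dissection your algorithm produces and then invoke invariance of the state sum under the Alexander moves relating any two dissections, which is exactly what the Biedenharn--Elliott identity and the orthogonality relation for $6j$-symbols encode. Without that step your argument establishes only the weaker statement that \emph{some} dissection satisfies the formula. The remaining bookkeeping --- the crossing coefficient $\alpha_{i}=\Delta_{i}$ times a $6j$-symbol contributing both a loop-value $[x_{j}]$ and a tetrahedron $[T_{k}]$, and the excision contributing a single $[T_{k}]$ --- is sound, modulo the paper's normalization convention that theta-nets evaluate to one.
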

The successive application of the Alexander moves, which correspond
in the dual form to the elimination of a 3-cycle and the crossing
identity, results in the introduction of sufficient internal edges
to dissect the interior of $D(F)$ into tetrahedra, giving a combinatorial
3-manifold $C(F)$ with $D(F)$ as its boundary. This decomposition
process is non-unique, however, the Biedenharn-Elliott identity and
the orthogonality of the $6j$-symbols ensures the equivalence of
the decompositions, \cite{moussouris1983quantum}. This is a special
case of the procedure to obtain the invariant%
\footnote{Notice that in the amplitude given above the theta-net factors are
missing. This is due to the fact that in \cite{moussouris1983quantum}
the spin networks are normalized such that the theta-nets are evaluated
to one. %
} described in \cite{turaev1992state}.

Moussouris does mention the importance of the orientation of the vertices
in the evaluation of the graph, pointing out that considering the
orientation of the vertices results in so-called phase factors, which
can be isolated as values of graphs with two vertices with the same
orientation. However, in the proof of the network version of the theorem
this consideration enters only in the first and second steps of the
induction on the number of vertices $V$ in $F$, disregarding the
fact that the orientation of the vertices of a recoupling graph affects
the embedding of it in a surface, which might be such that there is
no embedded cycle at all. This would mean that the spin network cannot
be reduced straightforward. We will discuss this case later. Moreover,
in the proof it is also assumed implicitly that after reducing all
embedded cycles the only diagram left is either a coupling coefficient
or a toroidal phase factor (cf. \subref{The-Toroidal-SN}); in the
latter case we can call such a recoupling graph a toroidal spin network
since the phase factor left at the end of the reduction contains the
information of the graph being embedded in the torus%
\footnote{The coupling coefficient with a toroidal phase factor has as its cellular
embedding exactly the one discussed at the end of the previous section.%
}. 

As seen in example \exaref{case-3-1}, the appearance of an embedded
cycle is not always the case and there exist spin networks which are
irreducible if we only consider the operations described in \cite{moussouris1983quantum},
thus, Moussouris' Decomposition Theorem is limited to planar and toroidal
spin networks. Hence, it is necessary to rewrite the Decomposition
Theorem in a more precise manner in order to consider the case where
the spin network is at most toroidal, i.e. with toroidal phase factors,
which are especially important for the $q$-deformed case.

We will now give an improved version of the theorem and its proof
following \cite{moussouris1983quantum}. We modified the network version
of the theorem stated in Moussouris' Ph.D. thesis to account for the
discussion above.
\begin{thm}
\textbf{\textup{\label{thm:Decomposition-Theorem}Decomposition Theorem: }}

A recoupling graph $F$ of a compact semi-simple group $G$, which
is at most toroidal, can always be evaluated as a sum of products
of coupling coefficients of $G$ and toroidal phase factors.\end{thm}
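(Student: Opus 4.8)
The plan is to prove the statement by induction on the number of vertices $V$ of the recoupling graph $F$, closely following Moussouris' original strategy but carefully tracking the role of the vertex orientations in the embedding. First I would fix a cellular embedding of $F$ determined by its rotation scheme (via \thmref{Rotation-Scheme-Theorem}), so that the circuits $W_1,\dots,W_f$ are well defined. The base cases ($V=2$) are handled directly: a two-vertex recoupling graph is evaluated either as a single coupling coefficient $\bigl\{\begin{smallmatrix}a&b&c\\j&k&l\end{smallmatrix}\bigr\}_{q}$ by the excision identity of \thmref{Moussouris-algorithm}, or, when the two vertices carry the \emph{same} orientation and thus produce no bounding $2$-cell of the required type, as a toroidal phase factor. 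This is exactly the step where the improvement over the original theorem enters, since the orientation of the vertices is what decides which of the two outcomes occurs.

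For the inductive step I would assume the result for all recoupling graphs with fewer than $V$ vertices and argue as follows. By hypothesis $F$ is at most toroidal, so by the \textbf{Fact} preceding the theorem its embedding is of $[6+6+6]$-type or $[4+4+10]$-type (or reduces to the planar situation), and in each of these the embedding has at least two $2$-cells; in particular there exists an embedded cycle bounding a region homeomorphic to a disc. I would then use this embedded cycle to perform a reduction: apply the crossing identity to push edges together and the excision identity to cut out a tetrahedron, thereby expressing $\Psi(F)$ as a finite sum $\sum_i \alpha_i \,\Psi(F_i)$ where each $F_i$ has strictly fewer vertices and each coefficient $\alpha_i=\Delta_i\bigl\{\begin{smallmatrix}a&b&i\\c&d&j\end{smallmatrix}\bigr\}_{q}$ is a coupling coefficient of $G$ (the loop-values $[x_j]$ being absorbed into the $\Delta_i$). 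Applying the induction hypothesis to each $F_i$ then writes every $\Psi(F_i)$ as a sum of products of coupling coefficients and toroidal phase factors, and substituting back yields the same form for $\Psi(F)$, since a product of coupling coefficients times coupling coefficients and toroidal phase factors is again of the desired type.

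The key subtlety, and the main obstacle, is guaranteeing that the reduction by an embedded cycle is always available under the hypothesis ``at most toroidal,'' and that it \emph{strictly decreases} the vertex count while leaving a graph that is again at most toroidal. Here I would lean on the classification in the \textbf{Fact}: the double-torus ($18$-type) case, in which no embedded cycle exists (as in \exaref{case-3-1}), is precisely what the hypothesis excludes, so the obstruction identified in the introduction does not arise. The delicate point is that after excising a tetrahedron along the chosen cycle the residual graph $F_i$ may itself require a toroidal phase factor; one must verify that the genus does not increase under the reduction, i.e. that cutting along a disc-bounding cycle cannot turn a toroidal network into a double-toroidal one. I expect this to follow from Euler's formula \eqref{Eulers-formula-for-graph-embeddings} together with the observation that excising a $2$-cell lowers $v$, $e$, and $f$ consistently with $h$ non-increasing, but making this airtight for \emph{every} intermediate diagram is where the argument needs the most care.

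Finally I would address the termination and well-definedness of the decomposition. Termination is immediate from the strict decrease of $V$ at each step, so only finitely many applications of the two identities of \thmref{Moussouris-algorithm} are needed, as required by \thmref{Moussouris-algorithm}. For well-definedness I would invoke, exactly as Moussouris does, the Biedenharn--Elliott identity and the orthogonality of the $6j$-symbols to show that the resulting sum of products of coupling coefficients and toroidal phase factors is independent of the particular sequence of reductions chosen. This leaves the final evaluation $\Psi(F)$ depending only on the labeling of $F$ and on the orientation data recorded in the toroidal phase factors, which is the content of the theorem.
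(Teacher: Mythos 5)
Your overall strategy --- induction on the number of vertices, reducing an embedded cycle at each step via the crossing and excision identities, and invoking the Biedenharn--Elliott identity for independence of the decomposition --- is essentially the paper's own. But there is a genuine gap at the step you yourself flag as critical: the existence of an embedded cycle. You justify it by appealing to the Fact at the end of Section 2, claiming that an at most toroidal $F$ must have a $[6+6+6]$- or $[4+4+10]$-type embedding and hence at least two $2$-cells. That Fact classifies the cellular embeddings of $K_{3,3}$ only; its partition types are partitions of the $18$ sides of that particular graph, and it says nothing about an arbitrary cubic recoupling graph cellular embedded in the torus. For a general toroidal $F$, Euler's formula gives $f=v/2$ faces, and even granting $f\geq 2$ one must still rule out that every face is bounded by a circuit with edges repeated in opposite directions rather than by a genuine cycle. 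The paper spends the entire first half of its proof on exactly this point, arguing the planar case via Euler's formula and Steinitz's theorem and the toroidal case by a separate constructive argument on how larger toroidal cubic graphs are built up from the $v=4$ case; your proposal substitutes a lemma that does not apply to general $F$.

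A secondary error: your base case is wrong. A closed cubic recoupling graph with $V=2$ is a theta-net (or, for suitable vertex orientations, a toroidal phase factor); it is not a coupling coefficient, and the excision identity --- which removes a $3$-cycle together with its three vertices --- cannot be applied to it. The coupling coefficient (tetrahedron) is the $V=4$ case, which the paper lists as a separate base case, and the paper additionally organizes the inductive step as a double induction on $V$ and on the length $l$ of the smallest embedded cycle, reducing an $l$-cycle to an $(l-1)$-cycle until $l=3$. On the other hand, your worry about whether the reduction can raise the genus of the residual graph, so that an intermediate diagram ceases to be at most toroidal, is a legitimate one that the paper itself passes over in silence; making that step airtight would strengthen the argument rather than merely reproduce it.
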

\begin{proof}
First we discuss the existence of an embedded cycle in any (at most)
toroidal spin network from a graph-theoretical perspective. The fact
that the spin network is at most toroidal means that it is either
planar or cellular embeddable in the torus, in the former case the
number of faces is at least 3 because a 3-valent graph has the number
of edges proportional to the number of vertices 
\[
e=\frac{3}{2}v
\]
Thus, $v-e+f=2\Rightarrow2f-4=v$ and we need at least $v=2$ to have
a closed spin network. The existence of more than one 2-cell in a
planar graph implies automatically the existence of an embedded cycle
because having an embedded circuit means that somewhere along the
walk on the circuit one has to ``change'' the side of the walk to
return to the edge repeated in the opposite direction. On a plane
and for closed graphs this can only be done by a loop in the walk,
which bounds a region on the plane, i.e. it contains at least one
embedded cycle. In this case the graph corresponding to the spin network
contains a bridge, thus is not (vertex) connected. However, as we
discussed above, we can be sure of the existence of an embedded cycle,
even in this special case. For the more general case of higher connected
graphs we have in fact from Steinitz's theorem, stating that every
3-connected%
\footnote{A k-connected graph is one which stays connected even if one removes
any number of vertices smaller than k.%
} planar graph corresponds to a convex polyhedron, that planar 3-connected
spin networks contain only cycles.

On the other hand, if the graph is cellular embeddable in the torus
then $v-e+f=0\Rightarrow2f=v$ thus if $v\geq4$ there are at least
2 faces. This is the case for a toroidal coupling coefficient, where
the embedding of $K_{4}$ is non-planar. It contains an embedded 3-cycle
and an embedded 9-circuit, thus the circuits should be related to
the global topology of the surface.%
\footnote{The existence of a circuit for 3-connected graphs may imply non-planarity
in general, but not the other way around, since the (6+6+6)-type embedding
contains no circuit. We only consider this remark as an interesting
point without pretending to state a general fact.%
} Now, consider the case where we have one 2-cell, then the graph represents
a toroidal phase factor. To construct bigger 3-valent graphs cellular
embeddable in the torus one has to introduce vertices and edges in
such a way, that the resulting graph has an even number of vertices
and embeddable cycles or circuits. The existence of an embedded cycle
is already in the case of $v=4$ with the help of Moussouris' algorithm
assured. Introducing two more vertices and an edge results in three
possible types of graphs; two of which contain an embedded cycle and
one that has two circuits. However, the latter one can be transformed
(in the sense of spin networks) into a graph containing an embedded
cycle%
\footnote{This is a very special case, which could also be considered as being
two toroidal phase factors connected imposing a very strong condition
for the spin network not to vanish.%
}. Following these thoughts one concludes the existence of embedded
cycles in more complex toroidal spin networks.

Having discussed the existence of embedded cycles in at most toroidal
spin networks we turn now to the proof of the decomposition via Moussouris'
algorithm. The proof is by induction on the number of vertices $V$
in $F$ and the size $l$ of the smallest embedded cycle.

If $V=2$, the recoupling graph is a toroidal phase factor or a theta-evaluation
of a vertex. The case $V=3$ is not possible.

If $V=4$, the recoupling graph is a (toroidal) coupling coefficient
or two phases.

If $V>4$, we look for the smallest embedded cycle in $F$ and reduce
it as follows, depending on its size $l$. A 2-cycle is reduced using
Schur's identity. This results in a new graph containing $V-2$ vertices.
A 3-cycle is eliminated by producing a single coupling coefficient
by the Wigner-Eckart theorem. Alternatively, one can regard the so-called
``crossing identity'' described below to reduce the 3-cycle to a
2-cycle and apply Schur's identity. The resulting graph contains $V-2$
vertices.

For the case $l>3$ we have an embedded cycle with $l$ edges labeled
by $j_{1},j_{2},\dots,j_{l}$. This reduces to a $(l-1)$-cycle by
the crossing identity derived from using the Recoupling Theorem on
the edge, say $j_{l}$, as depicted below. This operation results
in a coupling coefficient multiplied by a recoupling graph in which
the edge $j_{l}$ is removed while a new ``internal'' edge $x$
is introduced, coupling $j_{1}$ to $j_{l-1}$ and the other two edges,
which were coupling to $j_{l}$, are also coupled to $x$ and to each
other. This resulting product is summed over the new edge $x$ as
in the Recoupling Theorem. The cycle is then reduced until $l=3$.

\[
\xy/r2pc/:{\xypolygon4"S"{~:{(1.5,0):(0,.75)::}}},{\xypolygon4"L"{~:{(2.2,0):(0,1)::}~>{.}}},"S1"-(0.25,0.8)*{j_{l}},"L1";"S1"**@{-},"L2";"S2"**@{-},"S4";"L4"**@{-},"S3";"L3"**@{-}\endxy=\sum_{x}\alpha_{x}\xy/r2pc/:{\xypolygon4"L"{~:{(2.75,0):(0,.8)::}~>{.}}},-(.5,0)="B",{\xypolygon3"S"{~:{(.8,0.3):(-0.65,.65)::}}},"L3";"S2"**@{-},"L2";"S1"**@{-},"S3";"L0"+(1.2,0)="K"**@{-},"L1";"K"**@{-},"L4";"K"**@{-},"L0"+(.75,.25)*{x}\endxy
\]
 where $\alpha_{x}=\Delta_{x}\Bigl\{\begin{array}{ccc}
a & b & x\\
c & d & j
\end{array}\Bigr\}$ are the so called coupling coefficients, a product of a $6j$-symbol
and the ``value'' of the edge $x$ give by $\Delta_{x}=(-1)^{2x}[2x+1]$,
with the brackets meaning that the integer $2x+1$ is taken to be
a quantum integer for $q\neq1$.

This process of vertex reduction is repeated until $V=4$ giving as
a result a product of coupling coefficients and a phase factor summed
over all internal variables.
\end{proof}

\subsection{\label{sub:The-Toroidal-SN}The Evaluation of the Toroidal Spin Network
$K_{3,3}$ }

We will now apply the algorithm described in the above proof to specific
spin networks of the $[4+4+10]$- and $[6+6+6]$-type embeddings of
$K_{3,3}$ on a torus, denoted by $K_{3,3}^{(-1,+1)}$ and $K_{3,3}^{(-3,+3)}$
respectively%
\footnote{Observe that the labels $(-1,+1)$ and $(-3,+3)$ have the sign of
the overall sum over the orientations of each set and not only the
value of the set. The reason for this is that the evaluation depends
on the configuration of the orientations on each set of $K_{3,3}$,
as we will see later.%
}. This will be done in order to extract the information about the
topology encoded in the graph for the evaluation of non-planar spin
networks. 

In the case of the embedding $K_{3,3}^{(-1,+1)}$ we may start by
applying the crossing identity to the common edge of the 4-cycles
and then eliminating the two resulting 3-cycles by extracting two
coupling coefficients.

\bigskip{}

\begin{center}
\def\objectstyle{\scriptscriptstyle}
\xy/r2pc/:
{\xypolygon4"T"{~:{(3,0):(0,.8)::}}},{\xypolygon6"A"{~:{(1,0):(0,1)::}{\bullet}}},
"T1";"T2"**{}?<>(.5)="T5"*{\circ},"T4";"T3"**{}?<>(.5)="T6"*{\circ},
"T1";"T4"**{}?<>(.5)="T7","T2";"T3"**{}?<>(.5)="T8",
"A1";"A4"**@{-},"A3";"T5"**@{-},"A6";"T6"**@{-},
"T7";"T4"**{}?<>(.5)="T9"*{\times},"T8";"T3"**{}?<>(.5)="T10"*{\times},
"T10";"A5"**@{-},"T9";"A2"**\crv{"A6"+(.5,0)&"A2"+(1,0)},
"A1"-(.15,-.15);"A2"-(.15,.15)**{}?<>(.5)*{j_1},"A2"-(0,.2);"A3"-(0,.2)**{}?<>(.5)*{j_2},"A3"-(0,.2);"A4"+(.2,0)**{}?<>(.5)*{j_3},
"A4"+(.2,0);"A5"+(.2,0)**{}?<>(.5)*{j_4},"A5"+(0,.2);"A6"+(0,.2)**{}?<>(.5)*{j_5},"A6"-(.2,0);"A1"-(.2,0)**{}?<>(.5)*{j_6},
"A1"+(0,.2);"A4"+(0,.2)**{}?<>(.5)*{k},
"A3"-(.2,0);"T5"-(.2,0)**{}?<>(.5)*{m},"A6"+(.2,0);"T6"+(.2,0)**{}?<>(.5)*{m},
"A5"-(0,.2);"T10"-(0,.2)**{}?<>(.5)*{l},"A1";"T7"**{}?<>(.3)*{l},
\endxy
\par\end{center}

\bigskip{}

The result is a sum over a single internal edge $x$ of a product
of three $6j$-symbols weighted by a factor of $(-1)^{2x}[2x+1]$.
These are, however, not all the factors since the diagram left encodes
the information of the graph being embedded in a torus. This diagram,
which we will call \textbf{toroidal phase factor}, can be represented
in a torus as follows:\smallskip{}

\begin{center}
\xy/r1.5pc/:{\xypolygon4"T"{~:{(4,0):(0,.6)::}}},
"T0"-(.25,0)*{x},"T0"-(0,.5)="X1"*{\bullet},"T0"+(0,.5)="X2"*{\bullet},
"X1";"X2"**@{-},"T1";"T2"**{}?<>(.5)="M2";"X2"**@{-},"T4";"T3"**{}?<>(.5)="M1";"X1"**@{-},
"X1"+(.25,-.2)*{m},"X2"+(.25,.2)*{m},
"T1";"T4"**{}?<>(.5)="L2";"X2"**@{-},"T3";"T2"**{}?<>(.5)="L1";"X1"**@{-},
"L1";"X1"**{}?<>(.5)-(0,.25)*{l},"L2";"X2"**{}?<>(.5)+(0,.25)*{l}
\endxy
\par\end{center}

\smallskip{}

If we ``project'' this diagram to the plane by connecting the loose
ends of the edges $m$ and $l$, once we have disregarded the frame
of the above diagram, we get a theta-net with these edges crossing.
In order to get a more familiar theta-net, which can then be set to
have the value of 1, we need to ``twist'' the edge $x$. This is
done by following operation on a vertex%
\footnote{\label{fn:definition-of-twist-factor-under-cross}Notice that here
we are presenting the case of a vertex with an over-crossing, however,
this operation is defined for an under-crossing as well. In this case
we exchange $A\rightarrow A^{-1}$. This amounts to the choice of
orientation in the corresponding surface. Furthermore, observe that
there is an arbitrary choice of the twist factor since one can change
the sign in the exponent by performing some isotopies and using the
fact that a curl in an edge $a$ gives a factor of $(-1)^{-2a}A^{\pm4a(a+1)}$,
\cite{kauffman1994temperley}.%
}, \cite{carter1995classical,kauffman1994temperley},
\begin{equation}
\xygraph{!{0;/r1pc/:}[u(2)]!{\xcapv@(0)>{j}}*{\bullet}[lr(0.1)]!{\sbendv-@(0)|{b}}[ll]!{\sbendh-@(0)|{a}}[ld]!{\xunderv[2]}}=(-1)^{a+b-j}A^{2[a(a+1)+b(b+1)-j(j+1)]}\xygraph{!{0;/r1.5pc/:}[u]!{\xcapv@(0)>{j}}*{\bullet}[lr(0.1)]!{\sbendv-@(0)|{a}}[ll]!{\sbendh-@(0)|{b}}},\label{eq:Twist-factor}
\end{equation}
where $A=q^{2}$ is the deformation parameter of the underlying quantum
group. The result of applying Moussouris algorithm and the above twisting
rule is a sum of products of coupling coefficients as in the planar
case, however, the non-planar nature of the graph is reflected in
the ``twist factor'' given above, i.e. in the evaluation of the
toroidal phase factor. Thus, we have 
\begin{equation}
\big[K_{3,3}^{(-1,+1)}\big]=\sum_{x}\Delta_{x}\left\{ \begin{array}{ccc}
j_{3} & j_{4} & k\\
j_{6} & j_{1} & x
\end{array}\right\} \left\{ \begin{array}{ccc}
m & j_{5} & j_{6}\\
j_{4} & x & l
\end{array}\right\} \left\{ \begin{array}{ccc}
j_{2} & l & j_{1}\\
x & j_{3} & m
\end{array}\right\} (-1)^{l+m-x}A^{2[l(l+1)+m(m+1)-x(x+1)]}\label{eq:quantum-9j-symbol}
\end{equation}
where $\Delta_{x}=(-1)^{2x}[2x+1]_{q}$ is the loop-evaluation. The
squared brackets on the left hand side of the equation denote the
evaluation of a graph in terms of coupling coefficients.
\begin{rem*}
The \eqref{quantum-9j-symbol} looks similar to the $9j$-symbol.
However, considering the cases where $A=\pm1$, we have an overall
factor of $(-1)^{l+m+x}$ which corresponds to one of the coupling
coefficients having a vertex with the ``wrong'' orientation. This
can be seen by expressing one of the $6j$-symbols where the labels
$x,l,m$ form an admissible triple in terms of $3j$-symbols and permuting
the order of the labels in the corresponding $3j$-symbol by the following
relation, \cite{edmonds1996angular},
\[
(-1)^{j_{1}+j_{2}+j_{3}}\left(\begin{array}{ccc}
j_{1} & j_{2} & j_{3}\\
m_{1} & m_{2} & m_{3}
\end{array}\right)=\left(\begin{array}{ccc}
j_{2} & j_{1} & j_{3}\\
m_{2} & m_{1} & m_{3}
\end{array}\right).
\]
The resulting factor is exactly the one described in \cite[p. 65]{moussouris1983quantum},
i.e. a tetrahedron with one of the vertices having an orientation
so that two edges cross. In fact, the diagram left after applying
the crossing identity and eliminating only one of the two 3-cycle
gives such a tetrahedron. Notice that if we disregard the information
encoded in the orientation of the vertices, i.e. we set the phase
factor equal to 1, we get the usual $9j$-symbol.\end{rem*}
\begin{defn}
\label{The-toroidal-symbol}The \textit{toroidal symbol }is defined
in the following way
\[
\left[\begin{array}{ccc}
j_{1} & j_{4} & j_{7}\\
j_{2} & j_{5} & j_{8}\\
j_{3} & j_{6} & j_{9}
\end{array}\right]_{A,+}^{(j_{2},j_{6})}:=\sum_{x}\Delta_{x}\mathcal{A}_{j_{2},j_{6},x}^{+}\left\{ \begin{array}{ccc}
j_{1} & j_{2} & j_{3}\\
j_{6} & j_{9} & x
\end{array}\right\} \left\{ \begin{array}{ccc}
j_{4} & j_{5} & j_{6}\\
j_{2} & x & j_{8}
\end{array}\right\} \left\{ \begin{array}{ccc}
j_{7} & j_{8} & j_{9}\\
x & j_{1} & j_{4}
\end{array}\right\} 
\]
where $\mathcal{A}_{j_{2},j_{6},x}^{+}=(-1)^{j_{2}+j_{6}-x}A^{+2[j_{2}(j_{2}+1)+j_{6}(j_{6}+1)-x(x+1)]}$
and we call the labels $(j_{2},j_{6})$ the indices of the toroidal
symbol.

The symbol corresponding to the evaluation (\ref{eq:quantum-9j-symbol})
is given by
\begin{equation}
[K_{3,3}^{(-1,+1)}]=\left[\begin{array}{ccc}
j_{1} & j_{6} & k\\
l & j_{5} & j_{4}\\
j_{2} & m & j_{3}
\end{array}\right]_{A,+}^{(l,m)}.\label{eq:Toroidal-symbol-for-(-1,+1)}
\end{equation}

\end{defn}
Consider now the evaluation $\big[K_{3,3}^{(-3,+3)}\big]$. It is
possible to reduce the $[6+6+6]$-type embedding to the $[4+4+10]$-type
one by applying the crossing identity on two edges of the hexagonal
figure shown in example \exaref{case-3-3} which belong to the same
``exterior'' 6-cycle, for instance the edges $j_{2}=(23)$ and $j_{6}=(14)$

\bigskip{}

\begin{center}
\def\objectstyle{\scriptscriptstyle}
\xy/r2pc/:
{\xypolygon4"T"{~:{(3,0):(0,.8)::}}},{\xypolygon6"A"{~:{(1,0):(0,1)::}{\bullet}}},
"T1";"T2"**{}?<>(.5)="T5"*{\circ},"T4";"T3"**{}?<>(.5)="T6"*{\circ},
"T1";"T4"**{}?<>(.5)="T7"*{*},"T2";"T3"**{}?<>(.5)="T8"*{*},
"A1";"T7"**@{-},"A4";"T8"**@{-},"A2";"T5"**@{-},"A5";"T6"**@{-},
"T7";"T4"**{}?<>(.5)="T9"*{=},"T8";"T3"**{}?<>(.5)="T10"*{=},
"T5";"T2"**{}?<>(.5)="T11"*{\times},"T6";"T3"**{}?<>(.5)="T12"*{\times},
"T9";"A6"**@{-},"T10";"T12"**\crv{"T12"+(0,1)},"T11";"A3"**\crv{"T11"-(.3,.75)},
"A1"+(.2,.2)*{1},"A2"+(.2,.2)*{2},"A3"+(0,.2)*{3},"A4"-(.2,.2)*{6},"A5"-(.2,.2)*{5},"A6"-(0,.2)*{4},
"A1"-(.2,.2);"A2"-(.2,.2)**{}?<>(.5)*{j_1},"A2"-(.2,.2);"A3"-(-.2,.2)**{}?<>(.5)*{j_2},"A4"+(.2,.2);"A3"-(-.2,.2)**{}?<>(.5)*{j_3},
"A4"+(.2,.2);"A5"+(.2,.2)**{}?<>(.5)*{j_4},"A5"+(.2,.2);"A6"+(-.2,.2)**{}?<>(.5)*{j_5},"A6"-(.2,-.2);"A1"-(.2,.2)**{}?<>(.5)*{j_6},
"T7"+(0,.2);"T8"+(0,.2)**{}?<>(.1)*{k},"T7"+(0,.2);"T8"+(0,.2)**{}?<>(.9)*{k},
"T9"-(0,.2);"T10"-(0,.2)**{}?<>(.1)*{m},"T9"-(0,.2);"T10"-(0,.2)**{}?<>(.9)*{m},"T11";"A3"**{}?<>(.5)*{m},
"T5"+(.2,0);"A2"+(.2,0)**{}?<>(.5)*{l},"T6"-(0,.25);"A5"-(0,.25)**{}?<>(.5)*{l},
\endxy
\par\end{center}

\bigskip{}

From this procedure we get
\begin{equation}
\big[K_{3,3}^{(-3,+3)}\big]=\sum_{x,y}\Delta_{x}\Delta_{y}\left\{ \begin{array}{ccc}
j_{1} & j_{2} & l\\
m & x & j_{3}
\end{array}\right\} \left\{ \begin{array}{ccc}
j_{1} & j_{6} & k\\
m & y & j_{5}
\end{array}\right\} \left[\begin{array}{ccc}
j_{5} & l & j_{4}\\
y & m & k\\
j_{1} & x & j_{3}
\end{array}\right]_{A,+}^{(x,y)}\label{eq:Reduction-(3,3)-to-(1,1)}
\end{equation}
where the first $6j$-symbol is the result of the crossing identity
on the edge $j_{2}=(23)$ and the second one is the result of the
same identity on the edge $j_{6}=(14)$.

When reducing the spin network $K_{3,3}^{(-3,+3)}$ by applying the
algorithm on edges of different ``exterior'' $6$-cycles, for instance
on $j_{2}$ and $j_{5}$, the reduction takes longer, however, again
the Biedenharn-Elliott identity ensures that the evaluation is independent
of the decomposition procedure.

At this point it is important to describe some properties of the toroidal
symbol defined above. The symmetries of the (quantum) $6j$-symbols
make the toroidal symbol invariant under reflection on either of the
diagonals. Furthermore, if one exchanges two columns and two rows
leaving one of the arguments on the anti-diagonal fixed, then the
toroidal symbol stays invariant.%
\footnote{Notice that the arguments in the anti-diagonal are the ones paired
with the argument of the three $6j$-symbols over which it is summed,
i.e. they only appear once in the product of $6j$-symbols.%
} Moreover, there is an important new property that makes it possible
to change the indices of the toroidal symbol and to exchange the arguments
in the anti-diagonal. The following identity is due to a relation
between sums over products of three quantum $6j$-symbols found in
\cite{carter1995classical}%
\footnote{The relation is given in the reference in a different form, namely,
as a sum of products of three $6j$-symbols and a factor similar to
the twist factor described above. We used the proof for the case $A=\pm1$
given in \cite{carter1995classical} as a guide to reconstruct the
relation in order to present it as a ``(pseudo-)symmetry'' of the
toroidal symbol. %
},
\begin{equation}
\left[\begin{array}{ccc}
y & k & m\\
j_{5} & j_{4} & l\\
j_{1} & j_{3} & x
\end{array}\right]_{A,+}^{(x,y)}=\mathcal{A}_{k,j_{5},(j_{1}+m)}^{+}\left[\begin{array}{ccc}
y & j_{1} & j_{5}\\
m & x & l\\
k & j_{3} & j_{4}
\end{array}\right]_{A,+}^{(y,j_{4})}.\label{eq:exchange identity}
\end{equation}
Observe that the notation $\mathcal{A}_{k,j_{5},(j_{1}+m)}^{+}$ above
does not mean that the exponent of $A$ corresponding to the third
index is $-(j_{1}+m)((j_{1}+m)+1)$, but rather $-(j_{1}(j_{1}+1)+m(m+1))$.
Notice also that using this relation twice amounts to the identity,
thus, this transformation can be regarded as a ``(pseudo-)symmetry''.

If we compare $\big[K_{3,3}^{(-1,+1)}\big]$ with the\textit{ }$9j$-symbol
we may recognize the possibility to use the following relation between
a $9j$-symbol (without twist factor) and $6j$-symbols, as in \cite{edmonds1996angular},
\begin{equation}
\sum_{\mu}(2\mu+1)\left\{ \begin{array}{ccc}
j_{11} & j_{12} & \mu\\
j_{21} & j_{22} & j_{23}\\
j_{31} & j_{32} & j_{33}
\end{array}\right\} \left\{ \begin{array}{ccc}
j_{11} & j_{12} & \mu\\
j_{23} & j_{33} & \lambda
\end{array}\right\} =(-1)^{2\lambda}\left\{ \begin{array}{ccc}
j_{21} & j_{22} & j_{23}\\
j_{12} & \lambda & j_{32}
\end{array}\right\} \left\{ \begin{array}{ccc}
j_{31} & j_{32} & j_{33}\\
\lambda & j_{11} & j_{21}
\end{array}\right\} \label{eq:9j-and-6j-relation}
\end{equation}
However, this relation does not account for the twist factor, hence,
it is not possible to use straightforward.
\begin{claim}
The identity corresponding to \eqref{9j-and-6j-relation} is
\begin{equation}
\sum_{\mu}[2\mu+1]\left[\begin{array}{ccc}
j_{1} & j_{4} & \mu\\
j_{2} & j_{5} & j_{8}\\
j_{3} & j_{6} & j_{9}
\end{array}\right]_{A,+}^{(j_{2},j_{6})}\left\{ \begin{array}{ccc}
j_{1} & j_{4} & \mu\\
j_{8} & j_{9} & \lambda
\end{array}\right\} =\mathcal{A}_{j_{2},j_{6},\lambda}^{+}\left\{ \begin{array}{ccc}
j_{2} & j_{5} & j_{8}\\
j_{4} & \lambda & j_{6}
\end{array}\right\} \left\{ \begin{array}{ccc}
j_{3} & j_{6} & j_{9}\\
\lambda & j_{1} & j_{2}
\end{array}\right\} \label{eq:Relation-to-reduce-(3,3)-to-(1,1)}
\end{equation}
where $[2\mu+1]$ is a quantum integer corresponding to the loop-value
of $\mu$.\end{claim}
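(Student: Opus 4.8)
The plan is to prove \eqref{Relation-to-reduce-(3,3)-to-(1,1)} by substituting the definition of the toroidal symbol into the left-hand side and exploiting the single structural fact that the twist factor $\mathcal{A}_{j_{2},j_{6},x}^{+}$ does not depend on the summation variable $\mu$. Writing the toroidal symbol $\bigl[\,{}^{j_1\ j_4\ \mu}_{\ \ \dots}\,\bigr]$ out via \defnref{} (with $j_{7}=\mu$), its third $6j$-factor $\left\{\begin{smallmatrix}\mu & j_{8} & j_{9}\\ x & j_{1} & j_{4}\end{smallmatrix}\right\}$ is the \emph{only} one carrying $\mu$; the first two factors together with the weight $\Delta_{x}\mathcal{A}_{j_{2},j_{6},x}^{+}$ are $\mu$-independent. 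Interchanging the two finite sums, the $\mu$-summation therefore acts solely on the product of this third factor with the external $6j$-symbol $\left\{\begin{smallmatrix}j_{1} & j_{4} & \mu\\ j_{8} & j_{9} & \lambda\end{smallmatrix}\right\}$. This is exactly the structure of the classical prototype \eqref{9j-and-6j-relation}: the extra $\mu$-independent weight simply rides along through the derivation.

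First I would reduce the inner $\mu$-sum to orthogonality. Using the tetrahedral symmetries of the (quantum) $6j$-symbol --- invariance under interchanging any two columns and under swapping the upper and lower entries in any two columns --- both $\mu$-bearing symbols can be brought to the common form with $\mu$ in the last slot of the top row, after which the orthogonality relation
\[
\sum_{\mu}[2\mu+1]\left\{ \begin{array}{ccc} a & b & \mu\\ c & d & x \end{array}\right\}\left\{ \begin{array}{ccc} a & b & \mu\\ c & d & \lambda \end{array}\right\}=\frac{\delta_{x\lambda}}{[2x+1]}
\]
applies and collapses the internal sum to the single term $x=\lambda$. At $x=\lambda$ the surviving pieces assemble into the right-hand side: the loop weight $\Delta_{\lambda}=(-1)^{2\lambda}[2\lambda+1]$ meets the normalization $1/[2\lambda+1]$ from orthogonality, the twist factor becomes $\mathcal{A}_{j_{2},j_{6},\lambda}^{+}$ precisely as displayed, and the two remaining factors $\left\{\begin{smallmatrix}j_{1} & j_{2} & j_{3}\\ j_{6} & j_{9} & \lambda\end{smallmatrix}\right\}$ and $\left\{\begin{smallmatrix}j_{4} & j_{5} & j_{6}\\ j_{2} & \lambda & j_{8}\end{smallmatrix}\right\}$ are rewritten, once more by tetrahedral symmetry, as the $6j$-symbols $\left\{\begin{smallmatrix}j_{3} & j_{6} & j_{9}\\ \lambda & j_{1} & j_{2}\end{smallmatrix}\right\}$ and $\left\{\begin{smallmatrix}j_{2} & j_{5} & j_{8}\\ j_{4} & \lambda & j_{6}\end{smallmatrix}\right\}$ of the right-hand side. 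In each case the four triangle triples coincide, so these are genuine identities of the symbol and not merely reorderings.

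The hard part will be the phase bookkeeping. The residue $\Delta_{\lambda}/[2\lambda+1]=(-1)^{2\lambda}$ is exactly the sign that appears explicitly in the classical version \eqref{9j-and-6j-relation}, whereas the statement of the claim carries no separate $(-1)^{2\lambda}$ but instead the single factor $\mathcal{A}_{j_{2},j_{6},\lambda}^{+}$. One must therefore verify carefully that, in the conventions used here for the quantum $6j$-symbol and for the orientation-dependent factor $\mathcal{A}^{+}$, this residual sign is already absorbed --- concretely, that the sign $(-1)^{j_{2}+j_{6}-\lambda}$ contained in $\mathcal{A}_{j_{2},j_{6},\lambda}^{+}$, together with any parity constraints forced by the triangle inequalities of the surviving symbols, reproduces (or cancels against) the orthogonality sign to give precisely the displayed right-hand side. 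I would treat this parity/phase consistency as the only genuinely non-formal step; everything else is bookkeeping of symmetries of the $6j$-symbol and the single collapse $x=\lambda$. Once the phase is settled, the identity holds, and applying it a second time manifestly restores the original configuration, consistent with the ``(pseudo-)symmetry'' remark made for \eqref{exchange identity}.
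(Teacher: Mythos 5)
Your argument is essentially the paper's own proof: expand the toroidal symbol via its definition, observe that $\mu$ occurs in only one of the three $6j$-factors, interchange the finite sums, and invoke orthogonality of the $6j$-symbols to collapse the internal sum to the single term $x=\lambda$. Your extra care over the residual phase $\Delta_{\lambda}/[2\lambda+1]=(-1)^{2\lambda}$ is warranted --- the paper's one-line proof passes over it in silence, and relative to the classical prototype \eqref{9j-and-6j-relation} the stated right-hand side does appear to omit exactly that factor, so the parity check you flag is the one point where the claim as printed needs scrutiny rather than mere bookkeeping.
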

\begin{proof}
The only term containing $\mu$ in the expansion of the l.h.s. in
term of $6j$-symbols is of the form
\[
\sum_{\mu}[2x+1][2\mu+1]\left\{ \begin{array}{ccc}
j_{11} & j_{12} & \mu\\
j_{23} & j_{33} & x
\end{array}\right\} \left\{ \begin{array}{ccc}
j_{11} & j_{12} & \mu\\
j_{23} & j_{33} & \lambda
\end{array}\right\} =\delta_{x,\lambda}.
\]
Thus, the only term left is the one on the r.h.s.\end{proof}
\begin{rem*}
Notice that (\ref{eq:Relation-to-reduce-(3,3)-to-(1,1)}) only holds
if the toroidal symbol has that exact form, i.e. $\mu$ must be in
any counter-diagonal position. Labels in that position appear in the
expansion \eqref{quantum-9j-symbol} only in one $6j$-symbol, thus,
the orthogonality of the $6j$-symbols may be used straightforward.
Moreover, it is possible to transpose the toroidal symbol since this
only changes the ordering of the admissible triples in the $6j$-symbols,
i.e. it is possible to use the symmetry properties of the $6j$-symbols
to achieve a transposition of the toroidal symbol. In the regular
case without twist factor, the $9j$-symbols have some symmetries
and this constraint does not appear. However, one can use the identity
\vref{eq:exchange identity}, hence, even without having the symmetries
needed it is possible to transform the toroidal symbol in (\ref{eq:Reduction-(3,3)-to-(1,1)})
in order to bring it in a form suitable for the use of (\ref{eq:Relation-to-reduce-(3,3)-to-(1,1)})
to achieve a further simplification of $\big[K_{3,3}^{(-3,+3)}\big]$.
Thus, assuming the triple $(y,j_{4},j_{2})$ is admissible and using
the following relation 
\[
\sum_{x}[2x+1]\left\{ \begin{array}{ccc}
j_{1} & j_{2} & l\\
m & x & j_{3}
\end{array}\right\} \left[\begin{array}{ccc}
y & j_{1} & j_{5}\\
m & x & l\\
k & j_{3} & j_{4}
\end{array}\right]_{A,+}^{(y,j_{4})}=\mathcal{A}_{y,j_{4},j_{2}}^{+}\left\{ \begin{array}{ccc}
y & m & k\\
j_{3} & j_{4} & j_{2}
\end{array}\right\} \left\{ \begin{array}{ccc}
j_{2} & l & j_{1}\\
j_{5} & y & j_{4}
\end{array}\right\} 
\]
we can simplify (\ref{eq:Reduction-(3,3)-to-(1,1)}) further.
\end{rem*}
Summarizing the discussion above we obtain the result that the $[6+6+6]$-type
embedding of the $(3,3)$-bipartite graph as a spin network has following
evaluation
\begin{equation}
\big[K_{3,3}^{(-3,+3)}\big]=\mathcal{A}_{2k,(j_{5}+j_{4}),(j_{1}+j_{2})}^{+}\left[\begin{array}{ccc}
j_{1} & k & j_{6}\\
j_{2} & j_{3} & m\\
l & j_{4} & j_{5}
\end{array}\right]_{(-)}^{(k,m)}.\label{eq:9j-symbol?}
\end{equation}
Notice the change of sing of the toroidal symbol. This is due to the
choice of the orientation. In other words, observe that the edges
$l$ and $m$ wrap the torus in both examples above in fact in a different
way, thus, this amounts to choosing different crossings and with it
different signs in $\mathcal{A}_{.,.,.}^{\pm}$. Furthermore, observe
that the idea of using the twisting rule to change the orientation
of a vertex and with it its embedding before evaluating the given
spin network is highly arbitrary and naive. The factor resulting out
of this process is not the same as the factor gotten above as a result
of evaluating the spin network under consideration via the extended
algorithm and the identities above. 

It is interesting to observe the fact that the proportionality constant
reflects the possible maximal values that the spin label $l$ could
have if seen as a summand of the decomposition of the tensor product
of $j_{5}$ and $j_{4}$ or $j_{1}$ and $j_{2}$ giving the admissibility
conditions in the respective vertex, where the intertwiners $j_{5}\otimes j_{4}\rightarrow l$
and $j_{1}\otimes j_{2}\rightarrow l$ sit. 

Notice that the above relation is not exactly the toroidal symbol
in (\ref{eq:Toroidal-symbol-for-(-1,+1)}), however, it is proportional
to some toroidal symbol up to a choice of orientation of the surface
in which it is embedded, cf. \fnref{definition-of-twist-factor-under-cross}.
From this we learn that the orientation of the vertices change the
evaluation of the spin network. To see exactly how this happens further
work is needed. For instance, how does any two spin networks with
the same embedding relate to each other? Are they related via the
transformation (\ref{eq:exchange identity}) above or some symmetries?
However, the general form of the symbol remains and we can observe
that the topology of the surface in which the diagram is embedded
is reflected in the evaluation of the spin network. Nevertheless,
it is important to observe that even considering a single type of
embedding, say $[4+4+10]$, e.g. with a fixed value $\mathfrak{v}$
of the set of vertices, the toroidal symbols are not all equal for
different orientations but same $\mathfrak{v}$, although the difference
might be only a proportionality factor changing the indices of the
symbol, for instance,
\[
[K_{3,3}^{(-1,-1)}]^{(j_{4},j_{6})}=\mathcal{A}_{j_{4},j_{6},(l+m)}^{+}[K_{3,3}^{(+1,+1)}]^{(l,m)},
\]
hence, to find out how many distinct toroidal symbols actually are,
and the relation among them, some further work is needed in understanding
the relation between different orientations of the vertices with the
same resulting embedding.

Finally, we consider the embedding of $K_{3,3}$ in the double torus.
As mentioned before, this embedding has only one 2-cell, thus, is
not possible to reduce it by means of Moussouris' algorithm. It is
easy to see that using the crossing identity on this type of spin
network would delete one edge but also introduce another in the same
circuit, thus the total length of the circuit does not change and
there is no reduction of the graph. In order to decompose this spin
network, it would be necessary to change the orientation of a vertex
by the ``twisting'' operation defined above. This would give an
overall twist factor dependent on the edges involved. This operation
is, however, highly arbitrary since, depending on the choice of the
vertex to be twisted, one obtains either of the embeddings above or
even the original embedding, thus, it is not a viable way to proceed.
Nevertheless, since we were able to identify the toroidal phase factor
with the handle of the torus and obtained (up to orientation of the
surface) a symbol corresponding to this surface, one might ask if
all spin networks embeddable in an orientable closed surface with
genus $>0$ could be expressed as a sum of products of (quantum) $6j$-
and toroidal symbols, one for each handle. It is not hard to imagine
the existence of graphs with such evaluation.

At this point the embedding of the $(3,3)$-bipartite graph in the
double torus is of great interest since it might be the missing link
needed to generalize the Decomposition Theorem for spin networks with
genus $>1$. We could use the inverse operations of the ones used
in Moussouris' algorithm on this embedding in order to introduce enough
vertices and edges such that the resulting graph has two components,
one on each handle. It would then be possible to separate the components
using the generalized Wigner-Eckart theorem, \cite{moussouris1983quantum}.
This could help us to study the possibility of an evaluation of non-planar
graphs as a sum of products of (quantum) $6j$- and toroidal symbols.
In fact, a theorem of Battle, Harary, Kodama and Youngs states that
the orientable genus of a graph is additive over its blocks, \cite{beineke1997graph},
hence, together with the classification of orientable closed surface,
we might expect the above mentioned generalization. 

It might be useful to consider the case where the complete graph on
four vertices $K_{4}$ is toroidal. In this case, the embedding has
a 4-cycle that can be reduced to give a tetrahedron with a vertex
having the wrong orientation. Remember that the $K_{3,3}$ graph is
a minor of the complete graph on five vertices, as the toroidal phase
factor is a minor of $K_{4}$ cellular embedded in a torus, hence,
it might be the case that the embedding of $K_{3,3}$ in the 2-torus
is a ``phase factor of higher genus''. These speculative considerations
are based on some properties of the graphs, however, more work would
be needed in order to understand the existence, if any, of these relations.

However, one has to be careful in making a claim about generalizing
the Decomposition Theorem based only on toroidal symbols and phase
factors, since the existence of forbidden families of graphs prevent
us to rush into such conclusions. Following theorem is the generalization
of Kuratowski's theorem \vref{thm:Kuratowski's-Theorem} for graphs
which are not embeddable in surfaces $S_{h}$ of a given genus $h$,
\cite{beineke1997graph}.
\begin{thm}
For any surface S, the family of homeomorphically minimal graphs that
are not embeddable in S, called the minimal forbidden family $\mathcal{F}(S)$,
is finite.

The minimal forbidden family $\mathcal{F}(S)$ is define to be the
set of graphs having the following three properties: (i) if the graph
$G\in\mathcal{F}(S)$, then $G$ is not embeddable in S; (ii) if a
graph $G$ is not embeddable in S, then $G$ contains a subgraph homeomorphic
to some graph in $\mathcal{F}(S)$; (iii) $\forall G\in\mathcal{F}(S):$
$G$ is not homeomorphic to any subgraph of another graph in $\mathcal{F}(S)$. 
\end{thm}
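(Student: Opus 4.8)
The plan is to read this statement as the assertion that the class of graphs embeddable in a fixed surface $S$ possesses a finite \emph{complete set of obstructions}, and to obtain finiteness from a well-quasi-ordering principle rather than from any size bound extracted from Euler's formula \eqref{Eulers-formula-for-graph-embeddings} or the genus formula \eqref{genus-of-graph}. First I would record the hereditary structure that makes $\mathcal{F}(S)$ well defined: if a graph embeds in $S$, then so does every subgraph, every graph obtained by suppressing a degree-two vertex, and more generally every minor, since deleting or contracting an edge only simplifies a given embedding. Hence the collection $\mathcal{E}(S)$ of $S$-embeddable graphs is closed under taking minors, and $\mathcal{F}(S)$ is precisely the set of minor-minimal graphs lying outside $\mathcal{E}(S)$. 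With this identification, properties (i) and (ii) in the statement merely re-express that $\mathcal{F}(S)$ is a complete obstruction set, while property (iii) says that $\mathcal{F}(S)$ is an \emph{antichain} in the relevant order.

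The core of the argument is then to invoke the well-quasi-ordering of finite graphs under the minor relation: in any infinite family of graphs, one is a minor of another, so there can be no infinite antichain. This is the Graph Minors Theorem of Robertson and Seymour, and it is exactly the ingredient that substitutes for the missing a~priori bound on the size of an obstruction. Applying it to $\mathcal{F}(S)$, which is an antichain by property (iii), forces $\mathcal{F}(S)$ to be finite. The classical low-genus instances serve as sanity checks: Kuratowski's theorem \thmref{Kuratowski's-Theorem} is the case $S=S^{2}$ with $\mathcal{F}(S^{2})=\{K_{5},K_{3,3}\}$, and explicit finite lists are known for the projective plane and the torus.

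The remaining point to attend to is the discrepancy between the two orderings implicit in the statement. Property (ii) is phrased through \emph{homeomorphic subgraphs}, i.e.\ the topological-minor order, whereas the clean well-quasi-ordering is for the ordinary minor order; since graphs are \emph{not} well-quasi-ordered under topological minors in general, the theorem cannot be quoted verbatim. I would bridge this by bounding the maximum degree of a minimal obstruction and using that the minor and topological-minor relations coincide whenever the branch vertices have degree at most three --- precisely the regime of the trivalent spin networks of this paper, where the distinction disappears, whereas in general each forbidden minor must be converted into its finitely many bounded-degree homeomorphic realizations. The hard part, however, is concentrated entirely in the well-quasi-ordering input: nothing in the elementary apparatus of this section --- Euler's formula \eqref{Eulers-formula-for-graph-embeddings}, the genus formula \eqref{genus-of-graph}, or the Rotation Scheme Theorem \thmref{Rotation-Scheme-Theorem} --- yields any bound on the size of a minimal non-embeddable graph, so the proof must genuinely rely on the Graph Minors machinery.
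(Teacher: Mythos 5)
The paper offers no proof of this statement: it is imported verbatim from \cite{beineke1997graph} as a known result and used only to temper expectations about generalizing the Decomposition Theorem, so there is no in-paper argument to compare yours against. Your proposal supplies the standard modern proof, and its overall strategy is sound, but as written it has one genuine logical gap which you flag yourself and then only partially repair. Property (iii) makes $\mathcal{F}(S)$ an antichain in the \emph{topological-minor} (homeomorphic-subgraph) order, not in the minor order, so the Robertson--Seymour well-quasi-ordering cannot be applied to $\mathcal{F}(S)$ directly as in your second paragraph: two graphs can be incomparable under topological minors while one is a minor of the other, so a topological antichain need not be a minor antichain.

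The correct repair is not to ``bound the maximum degree of a minimal obstruction'' --- it is not clear a priori that homeomorphically minimal obstructions have bounded degree --- but to run the argument in two stages. First, embeddability in $S$ is minor-closed, so by well-quasi-ordering the set $\mathcal{M}$ of minor-minimal obstructions is finite. Second, if $G$ has $M\in\mathcal{M}$ as a minor, then $G$ contains a subdivision of one of the finitely many graphs obtained from $M$ by splitting each vertex $v$ into a tree with $\deg(v)$ leaves and no internal vertices of degree two; each such splitting still has $M$ as a minor and hence is non-embeddable in $S$. A homeomorphically minimal non-embeddable graph must therefore be homeomorphic to one of these finitely many splittings, which yields properties (i)--(iii) with $\mathcal{F}(S)$ finite. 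Your closing observations are accurate: none of the elementary tools of this section (Euler's formula, the genus formula, the Rotation Scheme Theorem) bounds the size of a minimal obstruction, the finiteness genuinely requires graph-minors machinery (for a fixed surface the well-quasi-ordering result for graphs on that surface suffices; the full Graph Minors Theorem is not needed), and for the trivalent graphs relevant to spin networks the two orders do coincide --- but the theorem as stated concerns arbitrary graphs, so the two-stage argument is unavoidable.
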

This theorem might lower our expectations of being able to express
a given graph in terms of toroidal symbols since we expect other spin
networks to be non-toroidal. For instance, there are more than 800
minimal forbidden graphs known for the torus, \cite{beineke1997graph}.
Thus, we need to analyze the general case in order to determine if
the evaluation of all spin networks with genus $>1$ can be expressed
as a sum of products of quantum $6j$- and toroidal symbols. Nevertheless,
using the analogy of the role of $6j$-symbols in the expression of
toroidal symbols the hope remains that the latter could be the factors
needed to evaluate spin networks of higher genus.

\section{Conclusion}

The above analysis of the minimal non-planar spin network $K_{3,3}$
shows that the topology of the spin network, i.e. the orientation
of its vertices and with it the surface in which it is embeddable,
is reflected in its evaluation. This affects the overall behavior
of the coupling symbol and gives rise to new ``symmetry'' properties.
Moreover, three types, or families, of minimal non-planar spin networks
where found. Two of them are toroidal and related by a constant factor
involving exponents of the deformation parameter of the underlying
quantum group. The examples given above of these two families of networks
are representative of each of them, but are not the only ones since
every configuration of the orientation of its vertices might give
rise to another toroidal symbol, or at least proportional to one of
the above, even if they are of the same family. In order to understand
how all of these toroidal symbols relate to each other a short analysis
of the effect on different configurations of orientations for a given
type is needed. 

To account for the non-planar nature of the spin networks discussed
an extension of Moussouri's algorithm was done using the twisting
identity of the Temperley-Lieb recoupling theory. Since only cellular
embeddings were used, it was possible to identify the resulting factor
in the evaluation with the handle of the torus in which the spin networks
were embedded. It was, however, not possible to fully achieve the
generalization of the theorem considered due to the inability of the
algorithm to reduce unambiguously the third type of minimal spin network
found, namely, the one embedded in a 2-torus with only one 2-cell.
Instead some considerations about how to proceed to attempt a generalization
of the theorem and the issues involved were discussed. Some hints
about the possibility of such generalization given by the classification
of closed orientable surfaces and the additive nature of the genus
of a graph are present, however, one has to be careful to conclude
a generalization based on these hints due to the existence of minimal
forbidden families of graphs for a given surface of genus $h$. For
this reason, one would need to consider the general case and extend
the algorithm accordingly, maybe with the generalized Wigner-Eckart
theorem for coupling graphs as given in \cite{moussouris1983quantum},
to account for arbitrary graphs of higher genus.

\subsection*{Acknowledgements: }

I am deeply grateful to John Barrett for having me in Nottingham and
allowing me to gain an understanding of a broad spectrum of very interesting
concepts. I also want to thank the University of Nottingham and the
Erasmus Internship Programm for supporting me during my stay; thanks
as well to the staff at the University of Munich for being so helpful
and patient, especially Mrs. Schleiss and Mr. Emmer. Finally, I want
to thank S. Hofmann for making it possible for me to be in Nottingham.

\bibliographystyle{plainnat}
\bibliography{Bib_Database}

\end{document}